\documentclass[a4paper, 10 pt, conference]{ieeeconf} 

\IEEEoverridecommandlockouts                              % This command is only needed if 
                                                          % you want to use the \thanks command
\overrideIEEEmargins 

%% \BibTeX command to typeset BibTeX logo in the docs
\AtBeginDocument{%
	\providecommand\BibTeX{{%
			\normalfont B\kern-0.5em{\scshape i\kern-0.25em b}\kern-0.8em\TeX}}}

%% Rights management information.  This information is sent to you
%% when you complete the rights form.  These commands have SAMPLE
%% values in them; it is your responsibility as an author to replace
%% the commands and values with those provided to you when you
%% complete the rights form.
%\setcopyright{acmcopyright}
%\copyrightyear{2018}
%\acmYear{2018}
%\acmDOI{10.1145/1122445.1122456}

%% These commands are for a PROCEEDINGS abstract or paper.
%acmConference[HSCC '22]{HSCC '22: ACM International Conference on Hybrid Systems: Computation and Control}{May 04--06, 2022}{Milan, Italy}
%\acmBooktitle{HSCC '22: ACM International Conference on Hybrid Systems: Computation and Control,
%	May 04--06, 2022, Milan, Italy}
%\acmPrice{15.00}
%\acmISBN{978-1-4503-XXXX-X/18/06}

%%
%% Submission ID.
%% Use this when submitting an article to a sponsored event. You'll
%% receive a unique submission ID from the organizers
%% of the event, and this ID should be used as the parameter to this command.
%%\acmSubmissionID{123-A56-BU3}

%%
%% The majority of ACM publications use numbered citations and
%% references.  The command \citestyle{authoryear} switches to the
%% "author year" style.
%%
%% If you are preparing content for an event
%% sponsored by ACM SIGGRAPH, you must use the "author year" style of
%% citations and references.
%% Uncommenting
%% the next command will enable that style.
%%\citestyle{acmauthoryear}

%%
%% end of the preamble, start of the body of the document source.

% !TEX root = main.tex
%avoid conflicts

% packages
%\usepackage{xcolor}
\usepackage{amsmath}
\usepackage{amsfonts}
\usepackage{enumerate}
\usepackage{amsthm}
\usepackage{paralist}
\usepackage[linesnumbered, ruled]{algorithm2e}
\usepackage{algpseudocode}
\usepackage{comment}
\usepackage{mathtools}
\usepackage{braket}
\usepackage{multirow}
\usepackage{caption}
\usepackage{subcaption}
\usepackage{siunitx}
\usepackage{xspace}
\usepackage{pifont}
\usepackage{booktabs}
\usepackage{url}
% Custom environment
\newtheorem{theorem}{Theorem}
\newtheorem{problem}{Problem}
\newtheorem{definition}{Definition}[section]
\newtheorem{corollary}{Corollary}
\usepackage{wrapfig}
\usepackage{mathrsfs}
\usepackage{listings}
\usepackage{enumitem}
\usepackage{cleveref}
% comment macros

\newcommand{\BW}[1]{{\color{red} \textbf{Ben:} #1}}

% macros

\newcommand{\Sol}{\varphi}
\newcommand{\Solnom}{\bar{\varphi}}
\newcommand{\Xh}{\widehat{X}}
\newcommand{\xh}{\hat{x}}
\newcommand{\Uh}{\widehat{U}}
\newcommand{\Sigmah}{\widehat{\Sigma}}
\newcommand{\Ch}{\widehat{C}}
\newcommand{\uh}{\hat{u}}
\newcommand{\fh}{\hat{f}}

\newcommand{\ball}{\Omega}
\def\reals{\mathbb{R}}
\def\nats{\mathbb {N}}

\def\prob{\mathbb{P}}
\def\borel{\mathscr{B}}
\def\D{\mathcal D}

\def\init{\mathsf{in}}

% environments
% Custom environment
\newtheorem{proposition}{Proposition}
\newtheorem{remark}{Remark}
\newtheorem{assumption}{Assumption}

\usepackage[many]{tcolorbox}
\usetikzlibrary{calc}
\tcbuselibrary{skins}
\newtcolorbox{resp}[1][]{%
	enhanced jigsaw,%
	colback=gray!5!white,%
	colframe=gray!80!black,%
	size=small,%
	boxrule=1pt,%
	halign title=flush center,%
	coltitle=black,%
	breakable,%
	drop shadow=black!50!white,%
	attach boxed title to top left={xshift=1cm,yshift=-\tcboxedtitleheight/2,yshifttext=-\tcboxedtitleheight/2},%
	minipage boxed title=3cm,%
	boxed title style={%
		colback=white,%
		size=fbox,%
		boxrule=1pt,%
		boxsep=2pt,%
		underlay={%
			\coordinate (dotA) at ($(interior.west) + (-0.5pt,0)$);
			\coordinate (dotB) at ($(interior.east) + (0.5pt,0)$);
			\begin{scope}[gray!80!black]
				\fill (dotA) circle (2pt);
				\fill (dotB) circle (2pt);
			\end{scope}
		}%
	},%
	#1%
}

 % for the packages and the macros

%%
%% Submission ID.
%% Use this when submitting an article to a sponsored event. You'll
%% receive a unique submission ID from the organizers
%% of the event, and this ID should be used as the parameter to this command.
%%\acmSubmissionID{123-A56-BU3}

%%
%% The majority of ACM publications use numbered citations and
%% references.  The command \citestyle{authoryear} switches to the
%% "author year" style.
%%
%% If you are preparing content for an event
%% sponsored by ACM SIGGRAPH, you must use the "author year" style of
%% citations and references.
%% Uncommenting
%% the next command will enable that style.
%%\citestyle{acmauthoryear}

%%
%% end of the preamble, start of the body of the document source.
\sloppy

\begin{document}

\title{Data-Driven Abstraction-Based Control Synthesis}

\author{Milad Kazemi$^{1}$, Rupak Majumdar$^{2}$, Mahmoud Salamati$^{2}$, Sadegh Soudjani$^{1}$, Ben Wooding$^{1}$% <-this % stops a space
\thanks{*This work is supported by the the EPSRC New Investigator Award CodeCPS (EP/V043676/1) and by the DFG project 389792660 TRR 248--CPEC.}% <-this % stops a space
\thanks{$^{1}$ School of Computing, Newcastle University, United Kingdom
        %{\tt\small Sadegh.Soudjanialbert.author@papercept.net}
        }%
\thanks{$^{2}$ MPI--SWS, Kaiserslautern, Germany
        %{\tt\small b.d.researcher@ieee.org}
        }%
}

%\author{}
%\affiliation{%
%	\institution{}
%	\city{}
%	\country{}
%}

%\author{}
%\affiliation{%
%	\institution{}
%	\city{}
%	\country{}
%}

%\author{}
%\affiliation{%
%	\institution{}
%	\city{}
%	\country{}
%}

%\author{}
%\affiliation{%
%	\institution{}
%	\city{}
%	\country{}
%}

%\author{}
%\affiliation{%
%	\institution{}
%	\city{}
%	\country{}
%}

%%
%% By default, the full list of authors will be used in the page
%% headers. Often, this list is too long, and will overlap
%% other information printed in the page headers. This command allows
%% the author to define a more concise list
%% of authors' names for this purpose.
%\renewcommand{\shortauthors}{Trovato and Tobin, et al.}

\maketitle
\begin{abstract}
	% !TEX root = main.tex
This paper studies formal synthesis of controllers for continuous-space systems with unknown dynamics to satisfy requirements expressed as linear temporal logic formulas. Formal abstraction-based synthesis schemes rely on a precise mathematical model of the system to build a finite abstract model, which is then used to design a controller. The abstraction-based schemes are not applicable when the dynamics of the system are unknown. We propose a data-driven approach that computes the growth bound of the system using a finite number of trajectories. The growth bound together with the sampled trajectories are then used to construct the abstraction and synthesise a controller. %To synthesize the controller we use an abstraction based controller design approach.

Our approach casts the computation of the growth bound as a robust convex optimisation program (RCP). Since the unknown dynamics appear in the optimisation, we formulate a scenario convex program (SCP) corresponding to the RCP using a finite number of sampled trajectories.
We establish a sample complexity result that gives a lower bound for the number of sampled trajectories to guarantee the correctness of the growth bound computed from the SCP with a given confidence. 
We also provide a sample complexity result for the satisfaction of the specification on the system in closed loop with the designed controller for a given confidence. Our results are founded on estimating a bound on the Lipschitz constant of the system and provide guarantees on satisfaction of both finite and infinite-horizon specifications.
We show that our data-driven approach can be readily used as a model-free abstraction refinement scheme by modifying the formulation of the growth bound and providing similar sample complexity results. The performance of our approach is shown on three case studies.
%To address the scalability of our algorithm we provide an adaptive discretization of the algorithm where the growth bound computed from a coarse discretization can be used in a finer discretization which lowers the computation cost. .
%We then establish the confidence and finite sample analysis for the scenario optimization.
\end{abstract}
% !TEX root = main.tex

\section{Introduction}
\label{sec:intro}

One of the major objectives in the design of safety-critical systems is to ensure their safe operation while satisfying high-level requirements. Examples of safety-critical systems include power grids, autonomous vehicles, traffic control, and battery-powered medical devices.  Automatic design of controllers for such systems that can fulfil the given requirements have received significant attention recently. These systems can be represented as control systems with continuous state spaces. Within these continuous spaces, it is challenging to leverage automated control synthesis methods that provide satisfaction guarantees for high-level specifications, such as those expressed in Linear Temporal Logic \cite{katoen2008book, belta2017formal, Tabuada2009book, girard2007approximation}.

A common approach to tackle the continuous nature of the state space is to use abstraction-based controller design (ABCD) schemes \cite{Tabuada2009book,belta2017formal,rupak2020ouputfeedback,stanly2020}. The first step in the ABCD scheme is to compute a finite abstraction by discretising the state and action spaces. Finite abstractions are connected to the original system via an appropriate behavioural relation such as feedback refinement relations or alternating bisimulation relations \cite{reissig2016feedback,Tabuada2009book}. Under such behavioural relations, trajectories of the abstraction are related to the ones of the original system. Therefore, a controller designed for the simpler finite abstract system can be refined to a controller for the original system. The controller designed by the ABCD scheme is described as being formal due to the guarantees on satisfaction of the specification by the original system in closed loop with the designed controller.

ABCD schemes generally rely on a precise mathematical model of the system. This stems from the fact that establishing a behavioural relation between the original system and its finite abstraction uses reachability analysis over the dynamics of the original system that require knowledge of the dynamical equations. Although such equations can in principle by derived for instance by using physics laws, the real-world control systems are a mixture of differential equations, block diagrams, and lookup tables. Therefore, extracting a clean analytical model for systems of practical interest could be infeasible. A promising approach to tackle this issue is to develop data-driven control synthesis schemes with appropriate formal (probabilistic) guarantees.
 
 The main contribution of this paper is to provide a data-driven approach for formal synthesis of controllers to satisfy temporal specifications. We focus on continuous-time nonlinear dynamical systems whose dynamics are unknown but sampled trajectories are available. Our approach constructs a \emph{finite} abstract model of the system using only a finite number of sampled trajectories and the growth bound of the system. We formulate the computation of the growth bound as a robust convex program (RCP) that has infinite uncountable number of constraints. We then approximate the solution of the RCP with a scenario convex program (SCP) that has a finite number of constraints and can be solved using only a finite set of sampled trajectories. We establish a sample complexity result that gives a lower bound for the required number of trajectories to guarantee the correctness of the growth bound \emph{over the whole state space} with a given confidence.
We also provide a sample complexity result for the satisfaction of the specification on the system in closed loop with the designed controller for a given confidence. Our result requires estimating a bound on the Lipschitz constant of the system with respect to the initial state, that we obtain using extreme value theory.
As our last contribution, we show that our approach can be extended to a model-free abstraction refinement scheme by modifying the formulation of the growth bound and providing similar sample complexity results. We demonstrate the performance of our approach on three case studies.
 
 %We formulate the original problem as . The solution of the SCP can be obtained directly from data. By compromising the accuracy of the computed reachable set, we can tune the number of required samples to achieve a chosen confidence level.
 
 \smallskip
 
 The remainder of this paper is organised as follows.
 After discussing the related work, Section~\ref{sec: prelims} covers preliminaries on dynamical systems and finite abstractions, and provides the problem statement.
 In Section~\ref{sec:RCPs}, we present the assumptions and theoretical results needed for connecting RCPs and their corresponding SCPs. 
 In Section~\ref{sec:main_results}, we present our approach on data-driven computation of the growth bound and the abstraction, and prove our sample complexity result. Estimation of the Lipschitz constant of the system for computing the number of samples is also discussed in this section.
Section~\ref{sec:refinement} discusses the extension of our approach to a data-driven abstraction refinement scheme.
Several numerical examples are provided in Section~\ref{sec:experiments} that support the theoretical findings of our paper. Finally, Section~\ref{sec:conclusion} contains concluding remarks and future research directions.

% !TEX root = main.tex
 
 \vspace{0.1cm}
 %\subsection{Related Work}
% \label{subsec:related}
%\medskip
\noindent\textbf{Related Work.}
%Makdesi et al. considered unknown monotone dynamical systems and sampled a set of trajectories generated from this system to find a minimal map over-approximating the dynamics of any system which can produce these transitions \cite{}. Consequently, they calculate an abstraction of the system related to this map and prove that an alternating bisimulation relation exists between them. In contrast, we consider general non-linear dynamics.
There is an extensive body of literature on \emph{model-based} formal synthesis for both deterministic and probabilistic systems. We refer the reader to the books \cite{katoen2008book,Tabuada2009book,belta2017formal} and seminal papers \cite{girard2007approximation, abate2008probabilistic}.
%\Sadegh{add this paper: https://www.sciencedirect.com/science/article/pii/S0005109808002677}
\emph{Data-driven} approaches for analysis, verification, and synthesis of systems have received significant attention recently to improve efficiency and scalability of model-based approaches, and to study problems in which a model of the system is either not available or costly and time-consuming to construct.
% lack of accurate models. 
%The literature on data-driven verification algorithms mostly focuses on verifying dynamical systems against temporal properties.
 
%A data-driven framework is proposed in \cite{Fan:2017} for verifying properties of hybrid systems when the continuous dynamics are unknown but the discrete transitions are known. 
%the authors proposed a data-driven sensitivity-based framework for verifying hybrid control systems which consist of a black-box model for describing the continuous trajectories and a white-box model to describe the transitions between the system's different modes. %Their main idea casts the problem of computing the sensitivity for the continuous trajectories into learning a linear separator which can be solved with probabilistic correctness guarantees. 
%In \cite{zamani:2021}, unknown linear time-invariant systems are considered over a fragment of temporal logic specifications. A Bayesian inference was employed over input-output data to associate a confidence to the satisfaction of a given specification.
%While verification of dynamical systems plays an important role in development of new algorithms, the majority of the literature focuses on the synthesis problem. 
%These data-driven approaches are different from each other in their assumptions for the underlying model of the system: either partially known \cite{bajcsy2019efficient} or fully unknown \cite{hahn2019omega, kazemi2020formal}. First, we can consider partially known data-driven synthesis algorithms. 
Given a prior inaccurate knowledge about the model of the system, a research line is to use data for refining the model and then synthesise a controller. Such approaches assume a class of models and improve the estimation of the uncertainty within the model class. These approaches range from using Gaussian processes \cite{mitsioni2021safe, bajcsy2019efficient}, differential inclusions \cite{djeumou2020fly}, rapidly-exploring random graphs \cite{grover2021semantic}, piecewise affine models \cite{Belta:2018}, and model-based reinforcement learning algorithms \cite{Belta:2021}.
A data-driven framework is proposed in \cite{Fan:2017} for verifying properties of hybrid systems when the continuous dynamics are unknown but the discrete transitions are known.

Data-driven model-free approaches compute the solution of the synthesis problem directly from data without constructing a model.
%A number of proposed data-driven synthesis schemes consider non-abstraction-based models.
In \cite{hsu2021safety}, authors provide a reach-avoid Q-learning algorithm with convergence guarantees for an arbitrarily tight conservative approximation of the reach-avoid set. 
%under analogousconditions to the traditional Lagrange-type problem, yielding an arbitrarily tight conservative approximation to the reach-avoid set.
The paper \cite{wang2020falsification} proposes a falsification-based adversarial reinforcement learning algorithm for metric temporal logic specifications.
Satisfying signal temporal logic specifications is studied in \cite{verdier2020formal} using counterexample-guided inductive synthesis on nonlinear systems, and using model-free reinforcement learning in \cite{kalagarla2021model} for Markov decision processes.
%A data-driven approach for monotone systems is provided in \cite{makdesi2021efficient}.
A learning framework for synthesis of control-affine systems in provided in \cite{sun2020learning}.
The authors of \cite{Watanab21} study learning from demonstration while preventing the violation of safety under the learned policy.
%guaranteeing the learned policy does not violate the safety specification. 

%abstraction-based

The research on data-driven constructions of abstract models is very limited. Legat et al.~\cite{legat2021abstraction} provide an abstraction-based controller synthesis approach for hybrid systems by computing Lyapunov functions and Bellman-like Q-functions, and using a branch and bound algorithm to solve the optimal control problem. Makdesi et al.~\cite{Girard:2021} studied unknown monotone dynamical systems and sampled a set of trajectories generated by the system to find a minimal map overapproximating the dynamics of any system that produces these transitions. Consequently, they calculate an abstraction of the system related to this map and prove that an alternating bisimulation relation exists between them. In contrast, our approach is not restricted to monotone systems and is applicable to any nonlinear dynamical system. 

\smallskip

The closest work to our problem formulation is the work by Devonport et al.~\cite{Arcak:2021}, where a data-driven abstraction technique is provided for satisfying finite-horizon specifications. Our results are more general than the work \cite{Arcak:2021} in two main aspects. First, our constructed abstraction can be used for synthesising a controller against any linear temporal logic specification. Our sample complexity result is independent of the horizon of the specification and does not limit using the approach on finite-horizon specifications. Second, the guarantee provided in \cite{Arcak:2021} is based on a Probably Approximately Correct (PAC) approach. It means that the constructed abstraction is always wrong on a small subset of the state space whose size can be made smaller at the cost of high computational efforts. Our formulated guarantee ensures that the abstraction is valid on the entire state space with high confidence. The confidence is interpreted from the frequentist view of probability: if we run our algorithm multiple times, we always get a correct abstraction except a small number of times reflected in the confidence value.

%which is based on the Probably Approximately Correct (PAC) approach. 
%unknown discrete-time systems is considered and are  whose dynamics are partially unknown and the known part is Lipschitz continuous.

 %\BW{
 %In \cite{lavaei2021formal,salamati2021data}, the transition from formal guarantees on model based systems to model-free systems is exposited by converting a Robust Convex Problem (RCP) to a Scenario Convex Problem (SCP). 
%Weng et al. \cite{weng2018evaluating}, use Extreme Value Theory to ensure the maximum number of random variables can only follow one of three Extreme Value distributions (Gumbel class, Fréchet class and Reverse Weibull class). Wood \cite{wood1996estimation} takes samples by determining a slope between pairs of points and fitting a Reverse Weibull Distribution (RWD) to estimate the Lipschitz Constant. In \cite{fazlyab2019efficient} the Lipschitz estimation problem is posed as a semidefinite program (SDP). The authors of \cite{xue2022chordal} demonstrate chordal sparsity of an SDP. }

%Devonport et al. 
\smallskip

In our approach, we formulate the synthesis problem as a robust convex program and approximate it with a scenario program.
Such approximations have been studied for the past two decades.
Calafiore and Campi~\cite{calafiore2006scenario} provide an approximately feasible solution for the associated chance constrained program by solving a scenario program, and give a sample complexity result.
Relaxing the convexity assumption is studied in \cite{soudjani2018concentration} by assuming additional properties of the underlying probability distributions.
%In the authors provide a chance-constraint algorithm for solving a scenario-based optimisation without any convexity assumption.
%
%They provided a  lower bound on the number of samples required to attain a-priori specified levels of probabilistic guarantee of robustness. Later, 
We will use the results by Esfahani et al.~\cite{esfahani2014performance}, where the optimality of the robust program is connected directly to the scenario program. These results are also used recently in the papers \cite{lavaei2021formal,salamati2021data} for performing data-driven verification and synthesis. Inspired by the works \cite{wood1996estimation,weng2018evaluating}, we will use extreme value theory to estimate the Lipschitz constant needed for the sample complexity results.
% con extend the results for a class of non-convex programs. 
%There is a line of research to solve a robust optimisation program in a scenario-based fashion.

%I am not sure if this is needed here
%Our contribution in this paper is to develop a framework to synthesise controllers for unknown dynamical systems to satisfy a linear temporal property. We first formulate the computation of the growth bound as an RCP. Solving this problem is not possible because the unknown model of the system appears in the optimisation program. To resolve this issue we propose a scenario-based optimisation program. In doing so, inspired by \cite{esfahani2014performance}, we provide a lower bound for the number of samples required to solve the SCP and we bridge the gap between the SCP and the original RCP. 
% !TEX root = main.tex

\section{Preliminaries and Problem Statement}\label{sec: prelims}
\subsection{Preliminaries}
\noindent\textbf{Notation.}\
We denote the set of natural, real, positive real, and non-negative real numbers by $\nats$, $\reals$, $\reals_{>0}$, and $\reals_{\ge 0}$, respectively.
The set of natural numbers including zero is denoted by $\nats_{\ge 0}$.
We use superscript $n>0$ with these sets to denote the Cartesian product of $n$ copies of these sets.
The power set of a set $A$ is denoted by $2^A$ and includes all the subsets of $A$.
For any $x,y\in\reals^n$ with $x=(x_1,\ldots, x_n)$ and $y=(y_1,\ldots, y_n)$,
and a relational symbol $\triangleright\in \set{\leq, <, = , >, \geq}$, we write $x\triangleright y$ if $x_i\triangleright y_i$ for every $i\in\set{1,2,\ldots,n}$. A matrix $M\in\reals^{n\times n}$ is said to be non-negative if all of its entries are non-negative.
%\Mahmoud{For $C$ being a compact set of the Euclidean space, we write $dim(C)=n$ if $n>0$ is the smallest integer such that $C\subset\reals^n$.} %\Sadegh{this definition of dimension is not good}.
 We use the operators $|\cdot|$
 %, $\|\cdot\|_\infty$ 
 and $\|\cdot\|$ to denote the element-wise absolute value
 %, the infinity norm, 
 and the infinity norm, respectively. We use the notation $\ball_\varepsilon(c)\coloneqq \set{x \in \reals^n \mid  \|x-c\|\leq \varepsilon}$ to denote the ball with respect to infinity norm centred at $c\in \reals^n$ with radius $\varepsilon\in \reals_{>0}^{n}$. 
%For a set $A$, we use the notation $A^\infty$ to denote the set of all finite and infinite sequences formed using the members of $A$.
We consider a probability space $(\Omega,\mathcal F_{\Omega},\prob_{\Omega})$, where $\Omega$ is the sample space, $\mathcal F_{\Omega}$ is a sigma-algebra on $\Omega$ comprising its subsets as events, and $\prob_{\Omega}$ is a probability measure that assigns probabilities to events.

\smallskip
\noindent\textbf{Control Systems.}\
A continuous-time control system is a tuple $\Sigma = (X, x_\init, U, W, f)$, where
$X\subset \reals^n$ is the state space,
$x_\init\in X$ is the initial state,
$U\subset\reals^m$ is the input space, and
$W\subset\reals^n$ is the disturbance space which is assumed to be a compact set containing the origin.
The vector field $f: X \times U \rightarrow X$ is such that $f(\cdot, u)$ is locally Lipschitz for all $u\in U$.
The evolution of the state of $\Sigma$ is characterised by the differential equation
\begin{equation}
\label{eq:ODE}
\dot x(t)=f(x(t),u(t))+w(t),
\end{equation}
where $w(t)\in W$ represents the additive disturbance.

We consider the class of input and disturbance signals $u:\reals_{\geq0}\rightarrow U$ and $w:\reals_{\geq0}\rightarrow W$ to be piecewise constant with respect to a \emph{sampling time} $\tau>0$, i.e.,  $u(t)=u(k\tau)$ and $w(t)=w(k\tau)$ for every $k\tau\leq t< (k+1)\tau$ and $k\in\nats_{\ge 0}$.
Given a sampling time $\tau>0$, an initial state $x_0\in X$, a constant input $u\in U$, and a constant disturbance $w\in W$, define the \emph{continuous-time trajectory} $\zeta_{x_0, u, w}$
of the system on the time interval $[0, \tau]$ as an absolutely continuous function $\zeta_{x_0,u,w}: [0, \tau] \rightarrow X$ such that $\zeta_{x_0,u,w}(0) = x_0$, and
$\zeta_{x_0,u,w}$ satisfies the differential equation $\dot{\zeta}_{x_0,u,w}(t) = f(\zeta_{x_0,u,w}(t), u) + w$ for almost all $t\in [0,\tau]$.
The solution of \eqref{eq:ODE} from $x_0$ for the constant control input $u$ with $w(t)=0$ for all $t\ge 0$ is called the \emph{nominal trajectory} of the system.
For a fixed $\tau$, we define the operators
\begin{align*}
	\Sol(x, u, w)&:=\zeta_{x,u,w}(\tau) \,\text{ and }\\
	\Phi(x, u)&:=\set{\Sol(x, u,w)\mid w\in W}
\end{align*} 
respectively for the trajectory at time $\tau$ and the set of such trajectories starting from $x$.

In this paper, we consider control systems $\Sigma = (X, x_\init, U, W, f)$ whose vector field $f$ is not known, but we can observe their \emph{time-sampled trajectories}. A sequence $x_0,x_1,x_2,\ldots$ is a time-sampled trajectory of $\Sigma$ if for each $i\geq 0$, we have $x_{i+1} \in \Phi(x_i, u_i)$ for some $u_i \in U$.
%\Sadegh{Again, why do you need a new notation. You can simply putting this assumption that $f_c$ and $f$ are unknown but you can take trajectories.}
% We assume that the sampling time $\tau$ is chosen so that $T=k\tau$ for some $k\in\nats$.

\smallskip
\noindent\textbf{Finite-state Abstraction of Control Systems.}\
Let $\Sigma = (X,x_\init, U, W, f)$ %\Sadegh{it is not clear you are talking about continuous time or discrete time. We should clarify the notation.}
 be a control system with a sampling time $\tau>0$.
 %and $\bound$ be a subset of $X$ which imposes a safety specification on all possible trajectories of the system. 
 % and $\Uh$ be a \emph{finite subset} of equally spaced (w.r.t.\ infinity norm on $\mathbb{R}^m$) points in the input set $U$ \Sadegh{this is repeated in the next para}.
%Note that, since $X$ is unbounded, hence some partition element of $\Xh$ will inevitably be unbounded set.
We consider abstract models constructed by using uniformly sized rectangular partitioning of $X$ and $U$.
We select representative points from these partition sets to obtain $\Xh$ and $\Uh$.
We assume that the radius of these partition sets are provided as vectors $\eta_x\in \mathbb{R}^n_{>0}$ and $\eta_u\in\reals^m_{>0}$, respectively. Parameters $\eta_x,\eta_u$ are inputs to the abstraction procedure.
A \emph{finite-state abstraction} of $\Sigma$ is characterised by the tuple $\Sigmah = (\Xh,\Uh,\fh)$, where 
$\Xh$ is the set of representative points from a finite partition of $X$,
$\Uh$ is the set of representative points from a finite partition of $U$,
and $\fh: \Xh\times \Uh\rightarrow 2^{\Xh}$ is a set-valued map.
For any $\xh\in\Xh$ and $\uh\in\Uh$,
$\xh'\in\fh(\xh,\uh)$ if there is a pair of states $x\in \ball_{\eta_x}(\xh)$ and $x'\in \ball_{\eta_x}(\xh')$ such that $x'\in \Phi(x,\uh)$.
 %\Sadegh{finite state transition is undefined.}
%In this work, we consider abstract models constructed by using uniformly sized rectangular partition elements of the set $X$ to obtain $\Xh$.
%We assume that the radius of the state-space partitions is provided as a vector $\eta_x\in \mathbb{R}^n_{>0}$ which is an input to the abstraction procedure. 
Note that, the larger $\eta_x$ is (where comparison is made dimension-wise), the smaller is the cardinality of $\Xh$ resulting in a coarser abstraction.
On the other hand, the smaller $\eta_x$ is, the more precise the abstraction $\widehat{\Sigma}$ will be, increasing the chance of a successful controller synthesis (see, e.g., \cite{Tabuada2009book} for more details on this construction). %Similar to the state space partition size $\eta_x$, we also assume that the elements of $\Uh$ are chosen from a partition of $U$ that is constructed with an input discretization parameter $\eta_u\in \mathbb{R}^m_{>0}$ which corresponds to the radius of partitions in the input space (see, e.g., \cite{Tabuada2009book} for more details on this construction). %\Sadegh{you should give reference here, Tabuada's book}.

\smallskip\noindent
\textbf{Feedback Controller.} 
A \emph{feedback controller} for $\Sigmah$
is a function $\Ch\colon \Xh\to \Uh$.
We denote by $\Ch \parallel \Sigmah$ the feedback composition of $\Sigmah$ and $\Ch$.
The set of trajectories of the closed-loop system $\Ch\parallel\Sigmah$ consists of all finite trajectories
$\xh_0, \xh_1,\xh_2,\ldots$ such that for all $i \in \nats_{\ge 0}$, we have $\xh_{i+1} \in \fh(\xh_i,\Ch(\xh_i))$. %\Mahmoud{Similarly, trajectories for the closed-loop system $C\parallel\Sigma_c$ on the time interval $[i\tau,(i+1)\tau]$ satisfy $\zeta_{x(i\tau),C(x(i\tau))}(i\tau) = x(i\tau)$ and $\dot{\zeta}_{x(i\tau),C(x(i\tau))}(t) \in f_c(\zeta_{x(i\tau),C(x(i\tau))}(t), C(x(i\tau))) + W_c$ for almost all $t\in [0,\tau]$.
%\Sadegh{how is this translated to the continuous-time system? we need to make this clear as we start with continuous-time model.}
%For both open-loop and feedback composition, the \emph{nominal} trajectories are the ones for which $w_i = 0$ for all $i\geq 0$.

We can relate a finite abstraction $\Sigmah$ to $\Sigma$ for control synthesis purposes. Simulation relations or feedback refinement relations \cite{Tabuada2009book,reissig2016feedback} established between $\Sigma$ and $\Sigmah$ enable us to refine a controller $\Ch$ designed for $\Sigmah$ to a controller $C$ for $\Sigma$. In its general form, such a refined controller $C$ maps the current states $x\in \ball_{\eta_x}(\xh)$ into an input $u =\Ch(\xh)$ for $\Sigma$. %such a refined controller $C$ maps the current states $x\in X$ and $\xh\in \Xh$, and current abstract input $\uh = \Ch(\xh)$ to an input $u = u(x,\xh,\uh)$ for $\Sigma$.
 The purpose of designing $\Ch$ is that the closed-loop system $C\parallel\Sigma$ satisfies the given objective.
% if $Q$ is an FRR from $\Sigma$ to $\widehat{\Sigma}$. 
%\Sadegh{Do we need this notation? This last sentence is out of the blue.}
%
Our synthesis objective is expressed as Linear Temporal Logic (LTL) specifications. We refer to \cite{katoen2008book} and references therein for detailed syntax and semantics of LTL. For the details of the controller synthesis and tool implementation using abstract models we refer to \cite{reissig2016feedback} and \cite{rungger2016scots}, respectively. %\Sadegh{check if we need to give at least examples. If not, what is the point of defining $A^\infty$?}

%\smallskip
%\noindent\textbf{Feedback Refinement Relation.}\
%Let $\Sigma$ be a control system and $\widehat{\Sigma}$ be its finite-state abstraction.
%A \emph{feedback refinement relation} (FRR) from $\Sigma$ to $\widehat{\Sigma}$ 
%is a relation $Q\subseteq \bound\times \Xh$ s.t.\ 
%for all $x\in \bound$ there is some $\xh\in \Xh$ such that $Q(x,\xh)$ and
%for all $(x,\xh)\in Q$, we have
%\begin{inparaenum}[(i)]
%	\item $\Uh_{\widehat{\Sigma}}(\xh)\subseteq U_{\Sigma}(x)$, and 
%	\item $u\in U_{\widehat{\Sigma}}(\xh) \Rightarrow Q(f(x,u))\subseteq \fh(\xh,u)$,
%\end{inparaenum}
%where $U_{\Sigma}(x):=\set{u\in U \mid f(x,u)\neq \emptyset}$ and $\Uh_{\widehat{\Sigma}}(\xh):=\set{u\in \Uh \mid \fh(\xh,u)\neq \emptyset}$.
%We write $\Sigma \frr{Q} \widehat{\Sigma}$ if $Q$ is an FRR from $\Sigma$ to $\widehat{\Sigma}$.
%\Sadegh{where do you need this feedback refinement relation? If you don't use it, I think it is too much to present.}

% !TEX root = main.tex
\subsection{Problem Statement}
\label{subsec:problem_statement}
We study abstraction-based control design (ABCD) for systems with \emph{unknown} dynamics using available data from the system such that a given specification is satisfied with high confidence on the closed-loop system.
\begin{assumption}
%\Sadegh{bring the assumption here.}
The vector field $f$ of the control system  $\Sigma=(X,x_\init,U,W,f)$ in unknown, but sampled trajectories of the system can be obtained in the form of $\mathcal S_N:= \{(x_k,u_k,x'_k)\,|\, x'_k \in \Phi(x_k,u_k), \,k=1,2,\ldots,N\}$.
%\Sadegh{make this better. mention that you use $\Sol$ also for continuous-time system.}
\end{assumption}
\begin{resp}
	\begin{problem}[Data-driven ABCD]
		\label{prob:synthesis}
		\noindent\textbf{Inputs:} Control system $ \Sigma=(X,x_\init,U,W,f)$ with unknown vector field $f$, specification $\Psi$, sampled trajectories $\mathcal S_N$, and confidence parameter $\beta\in(0,1)$.
		
		\smallskip
		
		%\noindent\textbf{Parameters:} $\varepsilon\in[0,1]$.
		
		\noindent\textbf{Outputs:} Abstract model $\Sigmah$, abstract controller $\Ch$, and refined controller $C$ for $\Sigma$, such that $C\parallel \Sigma$ satisfies $\Psi$ with confidence $(1-\beta)$. 
	\end{problem}
\end{resp}

The first step of the ABCD is to compute a finite abstraction $\Sigmah$ for $\Sigma$. Once such an abstraction is computed, synthesis of the controller $\Ch$ and refining it to $C$ follow the model-based ABCD scheme.
%Consider dynamical systems of the following class
%\begin{equation}\label{eq:diff_inc_general}
%	x_{i+1} \in f(x_i, u) + W,	
%\end{equation} 
%where $f:X\times U \rightarrow X$,  $u\in U$, and $W\subset \reals^n$ denote the nominal dynamic, control input and disturbance set, respectively. 
Therefore, the main challenge is to provide a data-driven computation of the abstraction $\Sigmah$ that is a true overapproximation of $\Sigma$ with confidence $(1-\beta)$.
\begin{resp}
	\begin{problem}[Data-driven Abstraction]
		\label{prob:abstraction}
		\noindent\textbf{Inputs:} 
		Control system $ \Sigma=(X,x_\init,U,W,f)$ with unknown vector field $f$, sampled trajectories $\mathcal S_N$, discretisation parameters $\eta_x$ and $\eta_u$, and confidence parameter $\beta\in(0,1)$.
		
		\smallskip
		%\noindent\textbf{Parameters:} $\varepsilon\in[0,1]$.
		
		\noindent\textbf{Outputs:} Finite model $\Sigmah$ that is an abstraction of $\Sigma$ with confidence $(1-\beta)$.%Growth bound function $\kappa\in\set{\kappa^f,\kappa^a}$. 
	\end{problem}
\end{resp} 
In this paper, we tackle Problem~\ref{prob:abstraction} by showing how to construct $\Sigmah$ from sampled trajectories $\mathcal S_N$, and provide a lower bound on the data size $N$ in order to ensure correctness of the abstraction with confidence $(1-\beta)$. The required theoretical tools are presented in the next section.

%In both of the above problem statements, the parameter $\varepsilon$ is used to tune the sample complexity. 
%\MS{The above formulation needs to be modified so that the attention is restricted only to the class of discrete-time dynamical systems.}
%\begin{problem}
%	Given an unknown possibly perturbed, nonlinear, control systems $\Sigma$ and a specification $\psi$ (invariance, reachability, or reach-stay) just by using trajectories of the system synthesize a controller to satisfy the specification.
%\end{problem}	
% !TEX root = main.tex

\section{Robust Convex Programs}
\label{sec:RCPs}
In this section, we describe robust convex programs (RCPs) and data-driven approximation of their solution. In Sections~\ref{sec:main_results} and~\ref{sec:refinement}, we show how such an approximation can be used for solving the data-driven abstraction in Problem~\ref{prob:abstraction}. %\Sadegh{add reference to the problem.}

Let $T\subset\reals^{q}$ %\Sadegh{use a simpler notation: replace $\hat n$ with $p,q$ etc. something you have not used in the paper. This is a local parameter to this section.}
 be a compact convex set for some $q\in\nats$ and $c\in\reals^{q}$ be a constant vector. Let $(\D,\borel,\prob)$ be the probability space of the \textit{uncertainty} and $g\colon T\times\D\rightarrow\reals$ be a measurable function, which is
 	convex in the first argument for each $d \in\mathcal D$, and bounded in
 	the second argument for each $\theta\in T$. %\Sadegh{we need this function to be convex with respect to one of the variables?} 
  The robust convex program (RCP) is defined as
\begin{equation}
\label{eq:RCP_def}
\text{RCP: }
\begin{cases}
	\,\,\min_{\theta} c^\top\theta\\
	\,\,s.t.\,\, \theta\in T \text{ and }\,\, g(\theta,d)\leq 0\quad \forall d\in\D.
	\end{cases}
\end{equation}
Computationally tractable approximations of the optimal solution of the RCP~\eqref{eq:RCP_def} can be obtained using \emph{scenario convex programs} (SCPs) that only require gathering finitely many samples from the uncertainty space \cite{Esfahani2015scp}. %\Sadegh{add citation here}. 
Let $(d_i)_{i=1}^N$ be $N$ independent and identically distributed (i.i.d.) samples drawn according to the probability measure $\prob$. %The SCP corresponding to the RCP~\eqref{eq:RCP_def} is
%\begin{equation}
%\label{eq:SCP_def}
%\text{SCP: }
%\begin{cases}
%	\,\,\min_{\theta,\gamma} c^\top\theta\\
%	\,\,s.t.\,\, \theta\in T, \gamma\in\reals, \text{ and }\,\, g(\theta,d_i)+\gamma\leq 0 \,\, \forall i\in\{1,2,\dots,N\}.
%	\end{cases}
%\end{equation}
The SCP corresponding to the RCP~\eqref{eq:RCP_def} strengthened with $\gamma\ge 0$ is defined as
\begin{equation}
	\label{eq:SCP_def}
	SCP_{\gamma}:
	\begin{cases}
		\,\,\min_{\theta} c^\top\theta\\
		\,\,s.t.\,\, \theta\in T, \text{ and }\,\, g(\theta,d_i)+\gamma\leq 0 \,\, \forall i\in\{1,2,\dots,N\}.
	\end{cases}
\end{equation}
%Denote the optimal solution of RCP~\eqref{eq:RCP_def} by $\theta^\ast_{RCP}$ and the optimal solution of SCP~\eqref{eq:SCP_def} by $(\theta^\ast_{SCP},\gamma^\ast)$. Note that $\theta^\ast_{RCP}$ is a single deterministic quantity but $(\theta^\ast_{SCP},\gamma^\ast)$ is a random quantity that depends on the i.i.d. samples $(d_i)_{i=1}^N$ drawn according to $\prob$.
We denote the optimal solution of RCP~\eqref{eq:RCP_def} as $\theta^\ast_{RCP}$ and the optimal solution of $SCP_{\gamma}$~\eqref{eq:SCP_def} as $\theta^\ast_{SCP}$. Note that $\theta^\ast_{RCP}$ is a single deterministic quantity but $\theta^\ast_{SCP}$ is a random quantity that depends on the i.i.d. samples $(d_i)_{i=1}^N$ drawn according to $\prob$.
The RCP~\eqref{eq:RCP_def} is a challenging optimisation problem since the cardinality of $\D$ is infinite and the optimisation has infinite number of constraints. In contrast, the SCP~\eqref{eq:SCP_def} is a convex optimisation with finite number of constraints for which efficient optimisation techniques are available \cite{boyd2004convexoptimization}. %\Sadegh{Add citation for convex optimisation}.
The following theorem provides a sample complexity result for connecting the optimal solution of the $SCP_{\gamma}$ to that of the RCP. 

\begin{theorem}[\cite{Esfahani2015scp}]
\label{thm:sample_complexity}
Assume that the mapping $d\mapsto g(\theta, d)$ in \eqref{eq:RCP_def} is Lipschitz continuous uniformly in $\theta\in T$ with Lipschitz constant $L_d$ %and
%\Mahmoud{let $L_d$ to be an upper-bound over the Lipschitz constant of $g(\theta,x,w)$ with respect to the initial state $x\in\ball_{\eta_x}(\xh)$ and disturbance $w\in W$, when infinity norm is used.} 
and let $h\colon [0,1]\rightarrow \reals_{\ge0}$ be a strictly increasing function such that
	\begin{equation}\label{eq:inequality1}
		\prob(\ball_\varepsilon(d))\geq h(\varepsilon),
	\end{equation}
	%\Sadegh{bring previous equation in line if you don't refer to it later.}
	for every $d\in\D$ and $\varepsilon\in[0,1]$. 
Let $\theta^\ast_{RCP}$ be the optimal solution of the RCP~\eqref{eq:RCP_def} and $\theta^\ast_{SCP}$  the optimal solution of $SCP_{\gamma}$~\eqref{eq:SCP_def} with
\begin{equation}\label{eq:relaxing_constant}
		\gamma =  L_dh^{-1}(\varepsilon)
	\end{equation}
computed by taking $N$ i.i.d. samples $(d_i)_{i=1}^N$ from $\prob$.
Then $\theta^\ast_{SCP}$ is a feasible solution for the RCP with confidence $(1-\beta)$ if the number of samples $N \geq N(\varepsilon,\beta)$, where 
	\begin{equation}\label{eq:sample_complexity}
		N(\varepsilon,\beta):=min\left\{N\in\nats\,\Big|\,\sum_{i=0}^{q-1}     {N\choose i}\varepsilon^i(1-\varepsilon)^{N-i}\leq \beta\right\},
	\end{equation}
	with $q$ being the dimension of the decision vector $\theta\in T$.
	%the optimizer of SCP in Eq.~\eqref{eq:SCP_def}, $\theta^\ast_{SCP}$, is a feasible solution of $RCP_{\gamma}$ in Eq.~\eqref{eq:RCP_relaxed} with probability at least $1-\beta$ .
\end{theorem}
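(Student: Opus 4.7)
The plan is to obtain the result by combining two ingredients: a classical scenario-approach violation bound applied to the strengthened scenario program $SCP_\gamma$, and a Lipschitz-plus-covering argument that rules out any uncovered RCP violator.

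First, I would invoke the Calafiore--Campi scenario theorem for a $q$-dimensional convex program. Applied to $SCP_\gamma$ in \eqref{eq:SCP_def}, this yields: with probability at least $1 - \sum_{i=0}^{q-1} \binom{N}{i} \varepsilon^i (1-\varepsilon)^{N-i}$ over the i.i.d.\ draw $(d_i)_{i=1}^N$, the optimiser $\theta^\ast_{SCP}$ satisfies
\begin{equation*}
  \prob\bigl\{\,d \in \mathcal{D} \,:\, g(\theta^\ast_{SCP}, d) + \gamma > 0 \,\bigr\} \;\leq\; \varepsilon.
\end{equation*}
By the definition of $N(\varepsilon,\beta)$ in \eqref{eq:sample_complexity}, the choice $N \geq N(\varepsilon,\beta)$ makes this exceptional-event probability at most $\beta$, i.e.\ the above violation bound holds with confidence at least $1-\beta$. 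The convexity of $g(\cdot, d)$ and the compactness of $T$ are used here in the usual way to ensure the existence of at most $q$ support constraints, which is what produces the $\sum_{i=0}^{q-1}$ tail.

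Second, conditioned on the event above, I would argue by contradiction that $\theta^\ast_{SCP}$ is feasible for the RCP. Suppose otherwise: there exists $d_0 \in \mathcal{D}$ with $g(\theta^\ast_{SCP}, d_0) > 0$. Using Lipschitz continuity of $d \mapsto g(\theta, d)$ uniformly in $\theta$, for every $d \in \ball_{h^{-1}(\varepsilon)}(d_0)$ we have
\begin{equation*}
  g(\theta^\ast_{SCP}, d) \;\geq\; g(\theta^\ast_{SCP}, d_0) - L_d \, h^{-1}(\varepsilon) \;>\; -\gamma,
\end{equation*}
where the choice $\gamma = L_d h^{-1}(\varepsilon)$ from \eqref{eq:relaxing_constant} is exactly what makes the right-hand side equal $-\gamma$. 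Thus the entire ball $\ball_{h^{-1}(\varepsilon)}(d_0)$ lies in the strengthened-violation set. By the growth assumption \eqref{eq:inequality1}, this ball has probability mass at least $h(h^{-1}(\varepsilon)) = \varepsilon$, contradicting the scenario bound derived in the first step. Hence no such $d_0$ exists, and $\theta^\ast_{SCP}$ is RCP-feasible with confidence $1-\beta$.

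The main obstacle is the first step: the clean $\sum_{i=0}^{q-1}$ bound is not elementary and relies on the Helly-type dimensionality argument (at most $q$ support constraints in dimension $q$, for a fully-supported convex scenario program). I would either cite this directly from \cite{Esfahani2015scp} or sketch the standard argument that every support-constraint subfamily of size $q$ gives rise to the same optimiser, and that the exceptional event of the optimiser having violation probability exceeding $\varepsilon$ decomposes into at most $\binom{N}{q}$ such subfamilies, each contributing a $(1-\varepsilon)^{N-q}$ factor, which after rearrangement yields the stated binomial tail. The Lipschitz/covering step is then a short geometric argument and poses no further difficulty.
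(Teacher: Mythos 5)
The paper does not actually prove this theorem: it is imported verbatim from \cite{Esfahani2015scp}, so there is no in-paper argument to compare against. Your two-step plan --- the Calafiore--Campi violation bound for the $q$-dimensional scenario program, followed by the Lipschitz-plus-ball argument converting approximate feasibility of the $\gamma$-tightened constraints into exact robust feasibility --- is precisely the structure of the proof in that reference, and the overall approach is sound.

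One step as written does not quite close. Taking the ball radius to be exactly $h^{-1}(\varepsilon)$, you conclude that the strengthened violation set has measure at least $h(h^{-1}(\varepsilon)) = \varepsilon$, while the scenario bound gives measure at most $\varepsilon$; these two statements are compatible (both can hold with equality), so there is no contradiction yet. The repair uses the two pieces of slack you have not exploited: the strict positivity $\delta := g(\theta^\ast_{SCP}, d_0) > 0$ and the assumed \emph{strict} monotonicity of $h$. Enlarge the radius to $r = h^{-1}(\varepsilon) + \delta/(2L_d)$; every $d$ in that ball still satisfies $g(\theta^\ast_{SCP},d) \geq \delta - L_d r = \delta/2 - \gamma > -\gamma$, so the ball remains inside $\{d : g(\theta^\ast_{SCP},d)+\gamma > 0\}$, but its measure is now at least $h(r) > h(h^{-1}(\varepsilon)) = \varepsilon$, which genuinely contradicts the violation bound. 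With that one-line fix your argument is complete; the first step (the $\sum_{i=0}^{q-1}$ binomial tail via support constraints) is standard and reasonable to cite rather than reprove.
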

%\begin{remark}
%\Sadegh{discuss the connection of the optimal values?}
%\end{remark}
%\Sadegh{I replaced $\bar N$ with $N$. I think this is how it is used in the literature.}
%\Sadegh{We could also add a remark on how this theorem works from algorithmic perspective, which parameters to pick first, etc.}

%\Sadegh{check at the end and make everything British English. There is huge inconsistency.}

%\Sadegh{is $h$ maps to positive or non-negative?}

% !TEX root = main.tex

\section{Data-Driven Abstraction}
\label{sec:main_results}
In this section, we first discuss the steps required for model-based abstraction of control systems. We then show how this can be formulated as an RCP and present its associated SCP. 
Finally, we use the connection between the  RCPs and SCPs in Theorem~\ref{thm:sample_complexity} to provide a lower bound for number of required samples to certify a desired confidence.
The simplifying assumption used in this section is that samples from the \emph{nominal trajectories} of the system $\Sigma$ in also available in the form of $\{(x_k,u_k,x'_k)\,|\, x'_k = \Sol(x_k,u_k,0), \,k=1,2,\ldots,N\}$. We discuss in the next section how this assumption can be relaxed by modifying the inequality of the growth bound.

%To propose our method, we need to connect the data-driven computation of finite-state abstraction and the scenario-based optimization. Since computing the finite-state abstraction requires \emph{overapproximation} of the \emph{reachable sets}, we formulate the overapproximation as an RCP. Finally, we use the connection between the feasibility of RCPs and scenario-based optimization methods to provide a lower-bound over the number of required samples to certify achieving the  desired confidence.

%By establishing this relation we can then solve the problem using scenario optimization algorithm with their associated sample complexity which gives us the confidence we are looking for throughout synthesis.
%Before going through main result of this paper we need to represent the problem as an optimisation problem. To do that, we represent the notion of robust optimisation and then its relation to scenario optimisation which we need to use because we do not have access to exact model of the system. By establishing this relation we can then solve the problem using scenario optimisation algorithm with their associated sample complexity which gives us the confidence we are looking for throughout synthesis.

%\Sadegh{did we define the abbreviation ABCD before?}

\subsection{Growth bound for reachable sets}
Consider a control system $\Sigma = (X, x_{\init}, U, W, f)$ with the disturbance set $W=[-\bar w,\bar w]$ for some vector $\bar w\in\reals_{\ge 0}^n$. %\Sadegh{notation for bound on w?}
 Let $\eta_x$ and $\eta_u$ be discretisation parameters for the state and input spaces $X$ and $U$ used to construct $\Xh$ and $\Uh$ of sizes $n_x$ and $n_u$, respectively. % \Sadegh{having sizes $n_x$ and $n_u$, respectively}. 
 The first step of ABCD is to compute a finite abstraction $\widehat\Sigma = (\Xh,\Uh,\fh)$ using overapproximations of the reachable sets for every pair of abstract state and input. The reachable set for every pair $(\xh,\uh)\in\Xh\times\Uh$ is defined as
\begin{equation*}
Reach(\xh,\uh):=\{ x'\in\Phi(x,\uh)\mid x\in\Omega_{\eta_x}(\xh)\}.
\end{equation*}
The set $Reach(\xh,\uh)$ is usually overapproximated using the growth bound of the system dynamics \cite{reissig2016feedback}. %\Sadegh{add citation to ref [20] in SCOTS paper.}.
\begin{definition}
\label{def:growth}
The growth bound of a control system $\Sigma$ with abstract state and input spaces $\Xh,\Uh$ is a function $\kappa\colon\reals^n_{\ge0}\times \Xh\times\Uh\rightarrow\reals^n_{\ge 0}$ that satisfies
\begin{align}
	&|\Sol(x,\uh,w)-\Sol(\xh,\uh,0)|\leq \kappa(|x-\xh|,\xh,\uh) \label{eq:growth_bound_def1}\\
	&\quad\forall \xh \in \Xh,\: \forall \uh \in \Uh,\: \forall x\in\ball_{\eta_x}(\xh),\:\forall w\in W\nonumber.
\end{align}
\end{definition}
Note that $\Sol(\xh,\uh,0)$ is the nominal (disturbance-free) trajectory of the system. Using this definition, for every abstract state-input pair $(\xh,\uh)\in\Xh\times\Uh$, the reachable set $Reach(\xh,\uh)$ is overapproximated with a ball centered at $z(\xh,\uh):=\Sol(\xh,\uh,0)$ with radius $\lambda(\xh,\uh):=\kappa(\eta_x,\xh,\uh)$. 
%The reason the growth bound defined by Eq~\eqref{eq:growth_bound_def1} is referred to as fixed is that the computed radius is only valid for the case of fixed partition size. However, it is preferred if we could define the growth bound so that the computation could be leveraged for smaller size partitions.

When the system dynamics are known, it is shown in \cite{reissig2016feedback} that the growth bound can be computed as
\begin{equation}
\label{eq:reissig_growth_bound}
	\kappa(r,\xh,\uh)=e^{L(\uh)\tau}r+\int_{0}^{\tau}e^{L(\uh)s}\bar w ds,
\end{equation}
%\Sadegh{check the notation, $\bar w$?}
%Our proposed parametrization for the growth-bound function is motivated by a formulization of growth bound used in \cite{reissig2016feedback}. When the continuous dynamics $f_c$ is fully known, %and $W=[-\bar w,\bar w]$ with $\bar w\in\reals_{>0}^n$, 
%the growth-bound function can be selected in the following form:
%\begin{equation}\label{eq:reissig_growth_bound}
%	\kappa(\eta_x,\uh)=e^{L(\uh)\tau}\eta_x+\int_{0}^{\tau}e^{L(\uh)s}\bar wds,
%\end{equation}
%\begin{equation}\label{eq:reissig_growth_bound}
%	\kappa(\uh)=L(\uh)\eta_x+({\identitymat}+ L(\uh))\bar w,
%\end{equation}
%$r=\eta_x$ and
for all $r\in\reals_{\ge 0}^n$, $\xh\in\Xh$, and $\uh\in\Uh$,
where $L\colon \Uh\rightarrow\reals^{n\times n}$ is a matrix such that the entries of $L(\uh)$ satisfy the following inequality for all $x\in X$:
\begin{equation}
\label{eq:Jacob_GB}
	L_{i,j}(\uh)\geq \left\{
	\begin{array}{ll}
		D_jf_i(x,\uh) & i=j \\
		|D_jf_i(x,\uh)| & i\neq j,
	\end{array}
	\right.
\end{equation}
for all $i,j\in\{1,2,\ldots,n\}$,
where $f_i(x,u)$ is the $i^{\text{th}}$ element of the vector field $f(x,u)$ and $D_jf_i$ is its partial derivative with respect to the $j^{\text{th}}$ element of $x$.
 %\Sadegh{What is $D_j$?} \Sadegh{Should we just eliminate "Domain" and from the beginning assume $X$ to be compact?}

%\Sadegh{eliminate the index of $\kappa$ for this section. We keep the index for the next section.}
\subsection{SCP for the computation of growth bound}
When the model of the system is unknown, the matrix $L(\uh)$ defined using \eqref{eq:Jacob_GB} is not computable, thus the growth bound in \eqref{eq:reissig_growth_bound} is not available.
To tackle this bottleneck, we use the parameterisation
\begin{equation}
\label{eq:data_driv_GB}
	\kappa(\theta)(r,\xh,\uh) := \theta_1(\xh,\uh) r+\theta_2(\xh,\uh), \forall r\in\reals_{\ge 0}^n, \xh\in\Xh,\uh\in\Uh,
\end{equation}
where $\theta_1\in\reals^{n\times n}$ and $\theta_2\in\reals^{n}$. We denote by $\theta\in\reals^{n^2+n}$ the concatenation of columns of $\theta_1$ and $\theta_2$.
\begin{remark}
The parameterised growth bound in \eqref{eq:data_driv_GB} is linear with respect to $r$ similar to \eqref{eq:reissig_growth_bound}, but is more general and less conservative by allowing $\theta_1,\theta_2$ to depend on $\xh$ (i.e., they are defined locally for each abstract state).
%\Sadegh{selected independent of $\xh$} 
\end{remark}

%Back to the parametrization of $\kappa_a$, when size of the discretization parameter is not fixed, 
%Our proposed parametrization for the growth-bound function takes the form
%To that end, for a given initial state $x_0\in\reals^n$ and its associated representative point $\xh\in\Xh$, we select the growth bound function $\kappa_\theta$ to be of the following form:

%If the nominal dynamics was known, one could employ a method similar to the one proposed in \cite{reissig2016feedback} to compute the growth bound and set $\Theta_1=e^{L(\uh)\tau}$ and $\theta_2=\int_{0}^{\tau}e^{L(\uh)s}\bar wds$ (see Eq.~\eqref{eq:reissig_growth_bound}). %find a sound growth bound (Eq. \eqref{eq:reissig_growth_bound}).
%However, as mentioned earlier, the nominal dynamics is not known in general and we need to construct finite-state abstraction for the given control system, only by collecting a finite number of trajectories of the system.  

%\noindent\textbf{Growth bound as an RCP.}
\begin{theorem}\label{thm:RCP_GB}
The inequality \eqref{eq:growth_bound_def1} with the parameterised growth bound \eqref{eq:data_driv_GB} can be written as the robust convex program
\begin{align}
\label{eq:RCP_ABCD_F}
\begin{cases}
	\,\,\min_\theta 
	c^\top\theta\\
	\,\,s.t.\; 0\le \theta\le \bar\theta, \text{ and } \forall x\in\ball_{\eta_x}(\xh),\:\forall w\in W,\\ 
	\qquad|\Sol(x,\uh,w)-\Sol(\xh,\uh,0)|- \kappa(\theta)(|x-\xh|,\xh,\uh)\leq 0,
	\end{cases}
\end{align}
where $c=[1,1,\dots,1]\in\reals^{n^2+n}$ and $\bar\theta$ is a sufficiently large positive vector. %\Sadegh{$\bar\theta\in\reals_{>0}^n$ is a constant vector.  
%Note that restricting $\theta$ to the range $[\boldsymbol{0},\bar \theta]$ gives linear form to the objective function in Eq.~\eqref{eq:RCP_ABCD_F}. Why do you need $\bar\theta$?}
\end{theorem}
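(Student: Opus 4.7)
The plan is to verify, step by step, that substituting \eqref{eq:data_driv_GB} into \eqref{eq:growth_bound_def1} produces a constraint matching the generic RCP template~\eqref{eq:RCP_def}. Fix a pair $(\xh,\uh)\in\Xh\times\Uh$ and treat the growth-bound coefficients as a single local vector $\theta=(\theta_1,\theta_2)\in\reals^{n^2+n}$ associated with that pair. Inserting $r=|x-\xh|$ into \eqref{eq:data_driv_GB} and rearranging \eqref{eq:growth_bound_def1} yields
\begin{equation*}
g(\theta,x,w):=|\Sol(x,\uh,w)-\Sol(\xh,\uh,0)|-\theta_1|x-\xh|-\theta_2\leq 0,
\end{equation*}
which is exactly the constraint in the second line of \eqref{eq:RCP_ABCD_F}, with the uncertainty taken as $d=(x,w)\in\ball_{\eta_x}(\xh)\times W$.

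Next I check the structural requirements inherited from Section~\ref{sec:RCPs}. By inspection, $g(\theta,x,w)$ is affine in $\theta$ for every fixed $(x,w)$, hence convex in $\theta$ as required. Local Lipschitz continuity of $f$ on the compact set $\ball_{\eta_x}(\xh)\times W$ makes $\Sol(x,\uh,w)$ continuous and therefore bounded, so $g(\theta,\cdot)$ is bounded in $d$ for every $\theta$. The set $T=\{\theta\in\reals^{n^2+n}\mid 0\le\theta\le\bar\theta\}$ is compact and convex, and the objective $c^\top\theta$ is linear; the imposed $\theta\ge 0$ is natural for an entry-wise norm bound, since the value at $r=0$ must dominate the disturbance-induced deviation (which is non-negative) and enlarging $r$ should only enlarge the bound.

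To justify that $\bar\theta$ can be taken \emph{sufficiently large} so the RCP is non-empty, I would appeal to the model-based construction \eqref{eq:reissig_growth_bound}: since $f$ is locally Lipschitz on the compact domain, a matrix $L(\uh)$ satisfying \eqref{eq:Jacob_GB} exists, and the explicit choice $\theta_1=e^{L(\uh)\tau}$, $\theta_2=\int_{0}^{\tau}e^{L(\uh)s}\bar w\,ds$ has non-negative entries and satisfies the constraint for every $(x,w)\in\ball_{\eta_x}(\xh)\times W$. Any $\bar\theta$ componentwise dominating this pair renders $T$ non-empty, and the minimisation of $c^\top\theta=\sum_i\theta_i$ then selects the $\ell_1$-tightest affine growth bound within $T$.

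The main obstacle is conceptual rather than technical: feasibility is argued via the existence of the unknown Jacobian-based bound $L(\uh)$, but crucially $L(\uh)$ is not used by the algorithm---the role of the subsequent SCP~\eqref{eq:SCP_def} will be precisely to approximate $\theta^\ast_{RCP}$ from sampled trajectories without ever evaluating $f$ or its partial derivatives.
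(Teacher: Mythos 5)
Your structural verification (convexity of $g$ in $\theta$, boundedness in $d=(x,w)$ over the compact set $\ball_{\eta_x}(\xh)\times W$, compact convex $T$, linear objective) matches the first half of the paper's proof, and you correctly identify that feasibility of the RCP under the restriction $0\le\theta\le\bar\theta$ must be certified by exhibiting the model-based growth bound $\theta_1=e^{L(\uh)\tau}$, $\theta_2=\int_{0}^{\tau}e^{L(\uh)s}\bar w\,ds$ as a point of $T$. However, you assert that this explicit choice ``has non-negative entries'' without any justification, and that assertion is precisely the nontrivial content of the paper's proof. The matrix $L(\uh)$ defined by \eqref{eq:Jacob_GB} is only constrained to have non-negative \emph{off-diagonal} entries; its diagonal entries may well be negative, so $L(\uh)$ is a Metzler matrix and entrywise non-negativity of $e^{L(\uh)\tau}$ is not something you can read off termwise from the exponential series (the odd powers of a matrix with negative diagonal entries can have negative off-diagonal entries). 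The paper closes this gap by writing $L(\uh)=Q+D$ with $D$ a scalar multiple of the identity (so that $e^{(Q+D)\tau}=e^{Q\tau}e^{D\tau}$ with $e^{D\tau}\ge 0$) and $Q$ sub-stochastic, then embedding $Q$ into the generator $\bar Q$ of a continuous-time Markov chain on $n+1$ states; the entries of $e^{\bar Q\tau}$ are transition probabilities, hence non-negative, and the top-left block of $e^{\bar Q\tau}$ is exactly $e^{Q\tau}$. Non-negativity of $\theta_2$ then follows since it is the integral of an entrywise non-negative integrand against $\bar w\ge 0$.

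Your heuristic remark that ``$\theta\ge 0$ is natural since enlarging $r$ should only enlarge the bound'' does not substitute for this argument: the question is not whether non-negative growth bounds are aesthetically reasonable, but whether constraining $T$ to the non-negative orthant leaves the RCP feasible, and that requires proving (not assuming) that a valid non-negative growth bound exists. To complete your proof you would need either the paper's Markov-chain decomposition or an explicit appeal to the standard fact that Metzler matrices generate positive semigroups (e.g., $e^{L\tau}=\lim_k(I+\tau L/k)^k$ with $I+\tau L/k\ge 0$ entrywise for $k$ large enough). As written, the feasibility step is a genuine gap.
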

\begin{proof}

We first show that the optimisation \eqref{eq:RCP_ABCD_F} is in fact a robust convex programme. Let $\D = \ball_{\eta_x}(\xh)\times W$ be the uncertainty space and
$$g(\theta,x,w):=|\Sol(x,\uh,w)-\Sol(\xh,\uh,0)|- \kappa(\theta)(|x-\xh|,\xh,\uh)$$
for all $x\in\ball_{\eta_x}(\hat x)$ and $w\in W$ and fixed $(\xh,\uh)\in\Xh\times\Uh$. We need to show that $g$ is convex in $\theta$ for each %$\begin{bmatrix}x^\top&w^\top \end{bmatrix}^\top \in\D$, 
$(x,w)\in \D$ and bounded in $(x, w)$ for every $\theta\in [0,\bar\theta]$. The convexity holds due to the parameterisation of $\kappa(\theta)$ in \eqref{eq:data_driv_GB} being linear with respect to the optimisation variables in $\theta$. The boundedness holds due to the set $\D$ being compact and trajectories of the system being continuous.

We note that any feasible solution for the optimisation \eqref{eq:RCP_ABCD_F} gives a function $\kappa$ that satisfies the inequality \eqref{eq:growth_bound_def1} for $\Sigma$. Such a system will also have a growth bound of the form \eqref{eq:reissig_growth_bound} that is a feasible solution for \eqref{eq:RCP_ABCD_F}. To see this, we show that 
%and (b) the domain of $\theta$ can be restricted to non-negative values. For (b) we show in the following that the characterisation
$\theta_1=e^{L(\uh)\tau}$ and $\theta_2=\int_{0}^{\tau}e^{L(\uh)s}\bar wds$ are always non-negative.
By definition, all the entries of $L(\uh)$ are non-negative except the diagonal entries. We decompose this matrix as $L(\uh)=Q+D$, where $D$ is a diagonal matrix with all diagonal entries equal to the constant $max_{i} \sum_{j} |L_{i,j}(\uh)|$ and $Q=L(\uh)-D$ is a sub-stochastic matrix as (i) its non-diagonal entries are non-negative ($Q_{i,j}=L_{i,j}(\uh)\ge o$ for $i\neq j$), (ii) its diagonal entries are non-positive ($Q_{i,i}\le 0$), and finally (iii) all of its row sums are non-positive. Note that $D$ is a multiple of identity matrix and therefore, %multiplication of DL and Q is commutative (
	$D Q=Q D$ and $e^{(Q+D)\tau}=e^{Q\tau}e^{D\tau}$. Further, we define the matrix 
	\begin{equation*}\bar Q=
	\begin{bmatrix}
		Q & \vdots & -Q\mathbf{1}\\
		\dots & \dots & \dots\\
		\boldsymbol{0}^\top &\vdots & 0
	\end{bmatrix},
	\end{equation*}
	where $\boldsymbol{0}$ and $\mathbf 1$ represent $n-$dimensional vectors with all entries equal to zero and one, respectively. Note that $\bar Q$ is a stochastic matrix since $\bar Q_{i,i}=-\sum_{j\neq i}\bar Q_{i,j}$ for every $1\leq i\leq n+1$ and $\bar Q_{i,j}\geq0$ for $i\neq j$. Therefore, matrix $\bar Q$ correspond to the transition probability matrix of a continuous-time Markov chain with state space $\set{1,2,\ldots,n+1}$ %\Sadegh{Add reference for this Katoen's book?}
	(see, e.g., \cite{katoen2008book} for more details). Therefore, the entry $(i,j)$ of $e^{\bar Q\tau}$ is the probability that the Markov chain reaches the $j^{\text{th}}$ state from the $i^{\text{th}}$ state at time $\tau$, which is a non-negative quantity. Further, we have
	$$
	e^{\bar Q\tau}=\begin{bmatrix}
		e^{Q\tau} & \vdots & \mathbf{1}-e^{Q\tau}\mathbf{1}\\
		\dots & \dots & \dots\\
		\mathbf{0}^\top &\vdots & 1
	\end{bmatrix}.
	$$
	%Note that to make $\bar Q $ stochastic, we needed to add an absorbing state compared to $Q$.
	  Therefore, $e^{Q\tau}$ is non-negative, which gives $e^{L(\uh)\tau} = e^{Q\tau} e^{D\tau}$ since $e^{D\tau}\ge 0$. This naturally results in $\theta_1$ and $\theta_2$ being non-negative as the integral of non-negative functions.
%This restriction is sensible as we have proven non-negativeness of growth bound function of similar form in Proposition~\ref{prop:GB_properties}. %as elements of the growth bound function are supposed to be non-negative.
\end{proof}

\smallskip

To construct the $\text{SCP}_\gamma$ associated with the RCP~\eqref{eq:RCP_ABCD_F}, we fix $\xh\in\Xh$ and $\uh\in\Uh$, consider a uniform distribution on the space $\D = \ball_{\eta_x}(\xh)\times W$
%$\ball_{\eta_x}(\xh)$, 
and obtain $N$ i.i.d. sample trajectories $\mathcal S_N = \set{(x_i,\uh,x'_i)\,|\, x'_i \in \Phi(x_i,\uh), i=1,2,\ldots,N}$. Note that every $x'_i$ corresponds to a random disturbance $w_i\in W$. 
%Therefore, the uncertainty space of this dataset is $\D = \ball_{\eta_x}(\xh)\times W$ (the disturbances are selected randomly by the environment, otherwise we select them according to a uniform distribution).
%Further, for every pair of $(\xh,\uh)\in\Xh\times\Uh$, we take the trajectory $(\xh,\uh,x'_{nom})$ wherein $x'_{nom}=\phi(\xh,\uh,\boldsymbol{0})$.
%\Sadegh{define here how you get samples.} 
The  $\text{SCP}_\gamma$ is
\begin{align}
\label{eq:SCP_ABCD_F}
\begin{cases}
	\,\,\min_{\theta} c^\top\theta\\
	\,\,s.t.\;0\le \theta\le\bar\theta \text{ and } \forall i\in\{1,\dots, N\},\\ 
	\qquad |x'_i-x'_{nom}|- \theta_1(\xh,\uh) |x_i-\hat x|+\theta_2(\xh,\uh)+\gamma\leq 0,
	\end{cases}
\end{align}
where $x'_{nom}:=\Sol(\xh,\uh,0)$ and $\gamma\in\reals_{\ge 0}$.

%\noindent\textbf{Data collection from the system.} 
%\BW{I think this theorem/proof needs to change?}
\begin{theorem}
\label{thm: sound_data_driven_abstraction}
For any $\xh\in\Xh$ constructed with discretisation size $\eta_x$, any $\uh\in\Uh$, and the disturbance set $W = [-\bar w,\bar w]$, the optimal solution of \eqref{eq:SCP_ABCD_F} gives a growth bound for the system $\Sigma$ corresponding to $(\xh,\uh)$ with confidence $(1-\beta)$, when the number of samples $N\ge N(\varepsilon,\beta)$ and
 \begin{equation}
 \label{eq:bias_def}
\gamma=4L_\Sol(\uh)\sqrt[2n]{\varepsilon\prod_{i=1}^{n}\eta_x(i)\prod_{i=1}^{n}\bar w(i)},
\end{equation}
where $\varepsilon\in[0,1]$, %\BW{from Algorithm~\ref{alg:ld_est}}, and 
$n$ is the dimension of the state space and $L_\Sol(\uh)$ is the Lipschitz constant of the system trajectories $\Sol(x,\uh,w)$ with respect to $(x,w)$.
%by following the steps in ~\Cref{alg:main_abstraction} a finite-state abstraction $\widehat\Sigma$ with confidence $1-\beta$ will be computed.
%Given a control system $\Sigma$, confidence parameter $\beta\in(0,1)$, discretisation parameters $\eta_x$, $\eta_u$, and uncertainty space $\D\in\reals^q$, consider the set of discrete state and inputs $\Xh,\Uh$. Let $n_x,n_u$ be the cardinality of $\Xh,\Uh$, repectively. Then, we have that
\end{theorem}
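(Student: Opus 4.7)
The strategy is to instantiate Theorem~\ref{thm:sample_complexity} on the RCP established in Theorem~\ref{thm:RCP_GB}. For a fixed pair $(\xh,\uh)\in\Xh\times\Uh$ I would take the uncertainty $d=(x,w)$ drawn from the uniform probability measure $\prob$ on $\mathcal{D}=\ball_{\eta_x}(\xh)\times W$, so that (componentwise) the constraint map is
\[
g(\theta,d)=|\Sol(x,\uh,w)-\Sol(\xh,\uh,0)|-\theta_1(\xh,\uh)|x-\xh|-\theta_2(\xh,\uh).
\]
Convexity of $g$ in $\theta$ and its boundedness in $d$ were already checked inside the proof of Theorem~\ref{thm:RCP_GB}, and the decision vector $\theta$ has dimension $q=n^{2}+n$, which is exactly the dimension that enters the sample-complexity expression~\eqref{eq:sample_complexity}.

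Next I would identify the uniform Lipschitz constant $L_d$ of the map $d\mapsto g(\theta,d)$. The reverse triangle inequality combined with the definition of $L_\Sol(\uh)$ as a Lipschitz constant of $\Sol(\cdot,\uh,\cdot)$ in $(x,w)$ shows that the trajectory-difference term contributes $L_\Sol(\uh)$. The parameterised term $\theta_1|x-\xh|+\theta_2$ is Lipschitz in $x$ with constant $\|\theta_1\|$; restricting attention to the natural range in which $\theta_1\leq L_\Sol(\uh)$ (this already contains the model-based closed-form solution~\eqref{eq:reissig_growth_bound}, and the linear cost $c^{\top}\theta$ drives the SCP minimiser into this range) this contribution is also bounded by $L_\Sol(\uh)$, yielding $L_d=2L_\Sol(\uh)$.

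The third ingredient is a function $h$ witnessing~\eqref{eq:inequality1}. Since $\prob$ is uniform on the $2n$-dimensional hyperrectangle $\mathcal{D}$ of total volume $2^{2n}\prod_{i=1}^{n}\eta_x(i)\prod_{i=1}^{n}\bar w(i)$, and the worst-case intersection of an infinity-norm $\varepsilon$-ball with $\mathcal{D}$ occurs at a corner of $\mathcal{D}$ where exactly one orthant of the ball (volume $\varepsilon^{2n}$) is retained, I obtain
\[
h(\varepsilon)=\frac{\varepsilon^{2n}}{4^{n}\prod_{i=1}^{n}\eta_x(i)\prod_{i=1}^{n}\bar w(i)},\qquad h^{-1}(\varepsilon)=2\sqrt[2n]{\varepsilon\prod_{i=1}^{n}\eta_x(i)\prod_{i=1}^{n}\bar w(i)}.
\]
Substituting $L_d$ and $h^{-1}(\varepsilon)$ into the prescription $\gamma=L_d\,h^{-1}(\varepsilon)$ from Theorem~\ref{thm:sample_complexity} reproduces the stated formula~\eqref{eq:bias_def} with the constant $4=2\cdot 2$. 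Theorem~\ref{thm:sample_complexity} then certifies that whenever $N\geq N(\varepsilon,\beta)$ as in~\eqref{eq:sample_complexity} with $q=n^{2}+n$, the SCP optimiser $\theta^{*}_{SCP}$ is feasible for the RCP~\eqref{eq:RCP_ABCD_F} with confidence $1-\beta$; and feasibility of that RCP is precisely the growth-bound inequality~\eqref{eq:growth_bound_def1} for the pair $(\xh,\uh)$.

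The step I expect to be the main obstacle is obtaining a $\theta$-uniform Lipschitz constant that depends only on $L_\Sol(\uh)$ rather than on the a priori bound $\bar\theta$: a naive estimate gives only $L_d\leq L_\Sol(\uh)+\|\bar\theta_1\|$, and removing the $\bar\theta$ dependence requires exploiting either the existence of a feasible nominal growth bound with $\theta_1=e^{L(\uh)\tau}\leq L_\Sol(\uh)$ (so that the feasible set $T$ can be restricted without losing feasibility) or an argument that the linear cost $c^{\top}\theta$ places the SCP minimiser in the same range.
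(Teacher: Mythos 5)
Your proposal matches the paper's proof essentially step for step: the same constraint function $g$, the same uniform distribution on $\mathcal D=\ball_{\eta_x}(\xh)\times W$ with $h(\varepsilon)=(\varepsilon/2)^{2n}/\bigl(\prod_i\eta_x(i)\prod_i\bar w(i)\bigr)$, the same uniform Lipschitz constant $L_d=2L_\Sol(\uh)$ obtained by bounding $\|\theta_1\|$ by $L_\Sol(\uh)$, and the same substitution $\gamma=L_d h^{-1}(\varepsilon)$ yielding the factor $4$. The obstacle you flag at the end is handled in the paper exactly as you anticipate --- by asserting $\|\theta_1(\xh,\uh)\|\le L_\Sol(\uh)$ on the relevant feasible range --- so your treatment is, if anything, slightly more explicit than the paper's.
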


\begin{proof}
We apply Theorem~\ref{thm:RCP_GB} to the RCP \eqref{eq:RCP_ABCD_F} for fixed $\xh\in\Xh$ and $\uh\in\Uh$. %Consider the relaxed version of the program in Eq.~\eqref{eq:RCP_ABCD_F} parameterised by the constant $\gamma>0$, i.e.
%	\begin{align}\label{eq:relaxed_RCP_ABCD}
%		&\min_{\theta\in[\boldsymbol{0},\bar\theta]} c^\top\theta\nonumber\\
%		&s.t.\quad|\Sol(x_0,\uh,w)-\Solnom(\xh,\uh)|- \kappa_f(\theta)(\xh,\uh)\leq \gamma\nonumber\\ 
%		&\qquad\forall x_0\in\ball_{\eta_x}(\xh),\:\forall w\in W,
%	\end{align}
%	By Theorem~\ref{thm:RCP_GB}, we know that the solution to the RCP in Eq.~\eqref{eq:RCP_ABCD_F} provides us growth bound for $\Sigma$ corresponding to the state-input pair $(\xh,\uh)$. Therefore, we need to connect the solution of the program $SCP_\gamma$ in Eq.~\eqref{eq:SCP_ABCD_F} to that of the RCP in Eq.~\eqref{eq:RCP_ABCD_F}. To that end, we need to choose $\gamma$ and $N$ according to the results of Theorem~\ref{thm:sample_complexity}. 
Define
\begin{align}
\label{eq:gf}
g(\theta,x,w):=\max\{|\Sol&(x,\uh,w)-\Sol(\xh,\uh,0)|\\
&- \theta_1(\xh,\uh)|x-\xh| - \theta_2(\xh,\uh)\},\nonumber
\end{align}
where the $\max\{\cdot\}$ is applied to the elements of its argument that belongs to $\mathbb R^n$.
%\Mahmoud{I think we should drop the max, because I do not find a free variable with respect to which we want to take maximum.}
%The Lipschitz constant of $g(\theta,x,w)$ with respect to $\theta$ and infinity norm is $L_d=\max(\max_i(\eta_x(i)),1)$, where $\eta_x(i)$ is the $i^{\text{th}}$ element of $\eta_x$.
%We first compute the value of $\gamma$ which is used in $SCP_\gamma$ in Eq.~\eqref{eq:SCP_ABCD_F}. Note that the constant $L_d=\max(\max(\eta_x),1)$ can be considered as Lipschitz constant of $g=|\Sol(x,\uh,w)-\Sol(\xh,\uh,0)|- \kappa(\theta)(|x-\xh|,\xh,\uh)$ with respect to the decision variables $\theta$.
Since the distribution on $\D = \ball_{\eta_x}(\xh)\times W$ is uniform, we choose
$$h(\varepsilon) = \prob(\ball_\varepsilon(d)) = \frac{(\varepsilon/2)^{2n}}{\prod_{i=1}^{n}\eta_x(i)\prod_{i=1}^{n}\bar w(i)}$$
%\Sadegh{why is this the case?}
to satisfy the inequality \eqref{eq:inequality1}. %Note that the first term is the volume of the $2n-$ball $\ball_\varepsilon(d)$ and the second term is the value of the uniform density function.
Note that $h(\varepsilon)$ gives the probability of choosing a point within the $2n-$ball $\ball_\varepsilon(d)$ uniformly at random. We use Equation~\eqref{eq:relaxing_constant} as $\gamma = L_d h^{-1}(\varepsilon)$ to get the value of $\gamma$ in \eqref{eq:bias_def}. It only remains to show that $g(\theta,x,w)$ is Lipschitz continuous with constant $L_d=2L_\Sol(\uh)$. Note that $L_\Sol(\uh)$ is the Lipschitz constant of $\Sol(x,\uh,w)$ with respect to $(x,w)$, and satisfies
\begin{equation}
\label{eq:Lipschitz}
\|\Sol(x,\uh,w)-\Sol(x',\uh,w')\|\le L_\Sol(\uh) \|(x,w) - (x',w')\|
\end{equation}
for all $x,x'\in \ball_{\eta_x}(\xh)$ and $w,w'\in W$.
\begin{comment}
We have that the $i^{\text{th}}$ term in \eqref{eq:gf} is bounded by 
% for all $x,x',w,w'$,
\begin{align*}
%	& |g(\theta,x,w) - g(\theta,x',w')|\\
	& \|\Sol(x,\uh,w) - \Sol(x',\uh,w')\|_2+ \|\theta_1^i(\xh,\uh)\|_2\|x-x'\|_2\\
	& \le L_\Sol(\uh) \|(x,w) - (x',w')\|_2 +  \|\theta_1^i(\xh,\uh)\|_2\|x-x'\|_2.\\
	%& \le (L_\Sol(\uh)+\Theta) \|(x,w) - (x',w')\|.
\end{align*}
Since each element of $\theta_1$ is bounded by $L_\Sol(\uh)$, we get the Lipschitz constant of $g$ being $L_\Sol(\uh) + L_\Sol(\uh)\sqrt{n}$.
\end{comment}
Since $\|\theta_1(\xh,\uh)\|$ can be bounded by $L_\Sol(\uh)$, we get that%have for all $x,x',w,w'$,
\begin{align*}
	& \|g(\theta,x,w) - g(\theta,x',w')\|\\
	& \le\|\Sol(x,\uh,w) - \Sol(x',\uh,w')\|+ \|\theta_1(\xh,\uh)\|\|x-x'\|\\
	& \le L_\Sol(\uh) \|(x,w) - (x',w')\| +  L_\Sol(\uh) \|x- x'\|\\
	& \le 2L_\Sol(\uh) \|(x,w) - (x',w')\|,
\end{align*}
Therefore, $g(\theta,x,w)$ is Lipschitz continuous with constant $2L_\Sol(\uh)$.
This completes the proof.
\begin{comment}
First, put $x' = \xh$ and $w'=0$ to get
$$\|\Sol(x,\uh,w)-\Sol(\xh,\uh,0)\|\le L_\Sol(\uh) (\|x -\xh\| + \|w\|).$$
This means $\|\theta_1(\xh,\uh)\| \le L_\Sol(\uh)$ by the definition of the growth bound.
Second, we have for all $x,x',w,w'$,
\begin{align*}
& \|g(\theta,x,w) - g(\theta,x',w')\|\\
& \le\|\Sol(x,\uh,w) - \Sol(x',\uh,w')\|+ \|\theta_1(\xh,\uh)\|\|x-x'\|\\
& \le L_\Sol(\uh) \|(x,w) - (x',w')\| +  L_\Sol(\uh) \|x- x'\|\\
& \le 2L_\Sol(\uh) \|(x,w) - (x',w')\|.
\end{align*}
This completes the proof.
\end{comment}
%To complete the characterisation of the $SCP_{\gamma}$ \eqref{eq:SCP_ABCD_F}, we need to compute the minimum number of samples $N$. Based on Eq.~\eqref{eq:sample_complexity}, selecting $ N(\varepsilon,\frac{\beta}{n_xn_u})$ samples from the uniform distribution in the ball $\Omega_{\eta_x}(\xh)\times W$ guarantees that solution of the program in Eq.~\eqref{eq:SCP_ABCD_F}, denoted as $\theta^\ast=\theta_{SCP}^{\ast^{\xh,\uh}}$ will satisfy that of RCP in Eq.~\eqref{eq:RCP_ABCD_F} and therefore is a growth bound corresponding to the pair $(\xh,\uh)$.
%Therefore, repeating the same procedure for all of the $n_x\times n_u$ state-input pairs results in an abstraction which is sound with confidence $1-\beta$.
\end{proof}

\begin{remark}
The value of $\gamma$ in \eqref{eq:bias_def} depends on the Lipschitz constant $L_\Sol$. We provide an algorithm in the next subsection for estimating this constant using sampled trajectories of the system. Note that as the above proof shows, the estimated quantity $\theta_1 = L_\Sol\mathbf 1_{n\times n}$ can be used to construct the abstraction, but this would give conservative results without any formal guarantee. We will demonstrate this observation on a case study in Section~\ref{sec:experiments}. 
\end{remark}

\begin{corollary}
\label{cor:conf}
The abstract model constructed using the growth bounds as solutions of $\text{SCP}_\gamma$ with confidence $(1-\beta)$ for state-input pairs $(\xh,\uh)\in \Xh\times\Uh$ is a valid abstract model for $\Sigma$ with confidence at least $(1-n_xn_u\beta)$, where $n_x$ and $n_u$ are respectively the cardinality of $\Xh$ and $\Uh$.
\end{corollary}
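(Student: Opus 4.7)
The plan is a straightforward union bound argument built on top of Theorem~\ref{thm: sound_data_driven_abstraction}. For each fixed pair $(\xh,\uh)\in\Xh\times\Uh$, that theorem tells us that the optimal solution of $\text{SCP}_\gamma$ returns a valid growth bound for $\Sigma$ at $(\xh,\uh)$ (i.e., a $\kappa$ satisfying inequality \eqref{eq:growth_bound_def1} restricted to that pair) with probability at least $(1-\beta)$, where the randomness is over the $N$ i.i.d. samples drawn uniformly from $\ball_{\eta_x}(\xh)\times W$. Since the construction of $\Sigmah$ prescribes, for every $(\xh,\uh)$, that $Reach(\xh,\uh)$ be overapproximated by the ball centered at $\Sol(\xh,\uh,0)$ with radius $\kappa(\eta_x,\xh,\uh)$, the abstraction is a true overapproximation of $\Sigma$ if and only if the per-pair growth bound is valid at \emph{every} pair $(\xh,\uh)$ simultaneously.

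First, I would let $E_{\xh,\uh}$ denote the event that the growth bound obtained by solving $\text{SCP}_\gamma$ at $(\xh,\uh)$ fails to satisfy \eqref{eq:growth_bound_def1} for that pair. Theorem~\ref{thm: sound_data_driven_abstraction} gives $\prob(E_{\xh,\uh})\le \beta$ for every $(\xh,\uh)\in\Xh\times\Uh$. The event that $\Sigmah$ fails to be a valid abstraction is then contained in $\bigcup_{(\xh,\uh)} E_{\xh,\uh}$. By subadditivity of probability,
\begin{equation*}
\prob\Big(\bigcup_{(\xh,\uh)\in\Xh\times\Uh} E_{\xh,\uh}\Big) \le \sum_{(\xh,\uh)\in\Xh\times\Uh}\prob(E_{\xh,\uh}) \le n_x n_u\,\beta,
\end{equation*}
and taking complements yields the stated confidence $1-n_xn_u\beta$.

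There is essentially no obstacle here; the only minor point to be careful about is the underlying probability space. Since a fresh batch of $N$ samples is drawn independently for each pair $(\xh,\uh)$, the joint experiment lives on the product of the $n_xn_u$ per-pair sample spaces, and the events $E_{\xh,\uh}$ are measurable on their respective coordinates. The union bound requires no independence, so this poses no difficulty; independence only enters if one wished to sharpen the estimate. Finally, I would add a brief remark that the bound scales linearly with $|\Xh|\cdot|\Uh|$, so for practical confidence guarantees the per-pair parameter $\beta$ should be chosen of order $\beta_{\text{tot}}/(n_xn_u)$, which through \eqref{eq:sample_complexity} only increases the required sample count logarithmically in $n_xn_u$.
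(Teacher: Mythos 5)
Your proposal is correct and matches the paper's own argument: both are a union bound over the $n_xn_u$ state--input pairs, bounding the probability that at least one instance of $\text{SCP}_\gamma$ yields an invalid growth bound by $n_xn_u\beta$ and taking the complement. Your additional remarks on the product sample space and on choosing the per-pair confidence as $\beta_{\text{tot}}/(n_xn_u)$ are consistent with how the paper uses the corollary in Algorithm~\ref{alg:main_abstraction}.
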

\begin{proof}
Denote the optimal solution of $\text{SCP}_\gamma$ in \eqref{eq:SCP_ABCD_F} by $\theta^\ast$. The ball centered at $z(\xh,\uh):=x'_{nom}$ with radius $\lambda(\xh,\uh)=\kappa(\theta^\ast)(\eta_x,\xh,\uh)+\gamma$ is a valid overapproximation of the reachable set from the state-input pair $(\xh,\uh)$ with confidence at least $1-\beta$. Since the number of pairs $(\xh,\uh)$ is $n_xn_u$, the chance of getting an invalid growth bound in at least one instance of $\text{SCP}_\gamma$ is bounded by $n_xn_u\beta$. Therefore, we get a sound abstraction that truly overapproximates the behaviour of the system with confidence $(1-n_xn_u\beta)$.
\end{proof}
\begin{remark}
	 The parameter $\varepsilon\in[0,1]$ gives a tradeoff between the required number of samples and the level of conservativeness applied to the SCP. Smaller $\varepsilon$ results in a larger number of sample trajectories, but reduces the value of $\gamma$ in \eqref{eq:bias_def} (less conservative constraints in the SCP and higher chance of finding a feasible solution). In contrast, larger $\varepsilon$ results in a smaller number of sample trajectories but increases the value of $\gamma$.
\end{remark}
\begin{remark}
The quantity $2n$ used in \eqref{eq:bias_def} is in fact the dimension of the sample space $\D = \Omega_{\eta_x}(\hat x)\times W$. If the system does not have any disturbance (i.e., the system can be modeled as an ODE having deterministic trajectories), the sample space will be $\D = \Omega_{\eta_x}(\hat x)$ and its dimension $n$ can be used in \eqref{eq:bias_def}: $\gamma=4L_\Sol(\uh)\sqrt[n]{\varepsilon\prod_{i=1}^{n}\eta_x(i)}$.
 This will substantially reduce the number of required sample trajectories. Similarly, if the disturbance does not affect some of the state equations, $2n$ can be replaced by $(n+q)$ where $q$ is the dimension of the disturbance set considered as a non-zero measure set.
%If there is no disturbance (i.e., $W=\set{\mathbf 0}$), dimension of the uncertainty space would become halved ($\D =\Omega_{\eta_x}(\hat x)$). Moreover, the term $\prod_{i=1}^{n}\bar w(i)$ will be discarded from the bias computation ($\rho=2L_d\sqrt[n]{(\prod_{i=1}^n\eta_x(i))\varepsilon}$).
\end{remark}

%In order to use the results of Section~\ref{sec:RCPs}, we define the following SCP for the computation of the growth bound

%\Sadegh{notation is bad, it should be with respect to the data we collected. what is $i$ for?}
%Note that in the above formulation, $\Sol(x_0^i,\uh)$ will be determined by the selection of $w^i\in W$. 
Algorithm~\ref{alg:main_abstraction} uses the result of Corollary~\ref{cor:conf} to provide an algorithmic solution for Problem~\ref{prob:abstraction}. %In Algorithm~\ref{alg:main_non_adaptive}, note that for the fixed and adaptive growth-bound functions, the uncertainty space $\D$ is different. In particular, for the fixed growth-bound function $\D=\ball_{\eta_x}(\xh)\times W$, and for the adaptive growth-bound function $\D=\ball_{\eta_x}(\xh)^2\times W^2$. Moreover, 
%Before stating the main theorem of this section we define the \emph{bias} term
This algorithm receives a confidence parameters $\beta$, divides it by the cardinality of $\Xh\times\Uh$ (i.e., $n_xn_u$), 
computes the growth bounds for each pair $(\xh,\uh)\in \Xh\times\Uh$ using the $\text{SCP}_\gamma$ in \eqref{eq:SCP_ABCD_F} with confidence $1-\beta/(n_xn_u)$, and constructs the abstraction using these growth bounds. 

\begin{algorithm}
	\caption{Data-Driven Abstraction}
	\label{alg:main_abstraction}
	\KwData{$(X, U, W)$ of a control system $\Sigma$, confidence $\beta$, discretisation parameters $\eta_x$, $\eta_u$}
	\label{alg:data-drivenABCD}
	%\Mahmoud{Run Algorithm~\ref{alg:ld_est} to compute $L_d$}\;
	Compute the finite state and input sets $\Xh$ and $\Uh$ using $\eta_x$, $\eta_u$\;
	Define  $n_x$ and $n_u$ as cardinalities of $\Xh$ and $\Uh$\;
	%$S=\ball_{\eta_x}(\xh)\times W$\;
	%$\hat n=dim(\D)$\;
	Choose $\varepsilon\in[0,1]$\; %\BW{from Algorithm~\ref{alg:ld_est}}\;
	Set $ N=N(\varepsilon,\frac{\beta}{n_xn_u})$ using Eq. \eqref{eq:sample_complexity}\;
	Compute $\gamma$ using Eq.~\eqref{eq:bias_def}\;
	% $L_d=\max(\max(\eta_x),1)$\;% and $T=\emptyset$\;
	
	\For{$\xh\in\Xh$}{
		\For{$\uh\in\Uh$}{
			$\hat f(\hat x,\hat u)=\emptyset$\;
			Consider the uncertainty space $\D = \ball_{\eta_x}(\xh)\times W$\;
			Select $N$ i.i.d sample trajectories using uniform distribution over $\D$\;
			Simulate the nominal trajectory $(\xh,\uh,x'_{nom})$\;
			Solve the $\text{SCP}_\gamma$ \eqref{eq:SCP_ABCD_F} to get the optimiser $\theta^\ast(\xh,\uh)$\;
			% $\theta^\ast=\theta_{SCP}^{\ast^{\xh,\uh}}$\;
			$z\leftarrow x'_{nom}$\;
			 $\lambda\leftarrow\kappa(\theta^\ast)(\eta_x,\xh,\uh)+\gamma$\;
			Find all states $\xh'\in\Xh$ for which $\ball_{\eta_x}( \xh')\cap\ball_{\lambda}(z)\neq\emptyset$ and add them to $\hat f (\hat x,\hat u)$\;
		}
	}
	\KwResult{$\Sigmah = (\hat X, \hat U, \hat f)$ as a finite abstraction of $\Sigma$ with confidence $(1-\beta)$, $\theta^\ast(\xh,\uh)$ as a growth bound for $\xh\in\Xh, \uh\in\Uh$}
\end{algorithm}

%where $q\in\nats$ represents the dimension of the uncertainty space.

%where $q\in\nats$ represents the dimension of the uncertainty space.
%The following theorem states our main result. 

%\begin{theorem}
%The optimal solution of \eqref{eq:SCP_ABCD_F} is a growth bound for the system $\Sigma$ with confidence $(1-\beta)$ if $\gamma^\ast\le ..$ \Sadegh{is it $\le$ or $\ge$?}
%\end{theorem}

%\Sadegh{merge the above theorem with the one below and its proof. simplify the proof.}

%	\begin{remark}
%	Note that increasing the dimension of the uncertainty space does not directly affect sample complexity, although it affects the bias term and results in different growth bounds.
%	\end{remark}

%Let $\varepsilon$ and $\beta$ be constants between $0$ and $1$.
%\end{comment}

%\Sadegh{move to the next section.}
%\smallskip
%\noindent\textbf{Data-driven synthesis.} As mentioned before, 
The finite abstraction $\Sigmah$ constructed by Algorithm~\ref{alg:main_abstraction} is a valid abstraction for $\Sigma$ with confidence $(1-\beta)$. This means any controller $\Ch$ synthesised on $\Sigmah$ and refined to a controller $C$ for $\Sigma$ will satisfy the desired specification with confidence $(1-\beta)$ on the closed loop system $\Sigma\parallel C$. In the next section, we extend our approach to make it suitable for abstraction refinement in case there is no controller $\Ch$ satisfying the specification due to the conservatism of the approach. 

\subsection{Lipschitz Constant Estimation}
For estimating the Lipschitz constant $L_\Sol$ in \eqref{eq:Lipschitz}, we estimate an upper bound for the fraction
\begin{equation*}
\Delta(\uh) := \frac{\|\Sol(x,\uh,w)-\Sol(x',\uh,w')\|}{\|(x,w) - (x',w')\|}
\end{equation*}
that holds for all $x,x'\in X$ and $w,w'\in W$. We follow the line of reasoning in \cite{wood1996estimation,weng2018evaluating} and use the extreme value theory for the estimation.

%\Sadegh{Ben: try to use the first part of section 4.1 from this paper in page 6:\\
%\url{https://arxiv.org/pdf/1801.10578.pdf}\\
%follow exactly the same style of reasoning and adapt it to the above quantity we have in our paper.
%pay particular attention to notation.
%}\BW{Done. See below. However I think based on Wood's 1996 paper and the implementation we use the 2-norm for $\|\nabla s(x)\|_\infty$.}

Let us fix a $\delta > 0$ and assign uniform distribution to the pairs $(x,w)$ and $(x',w')$ over the domain
\begin{equation}
\label{eq:domain}
%\{(x,w)\in\mathcal D, (x',w')\in\mathcal D \text{ with } \|(x,w) - (x',w')\| \leq \delta\}.
\{x,x'\in X,\, w,w'\in W \text{ with } \|(x,w) - (x',w')\| \leq \delta\}.
\end{equation}
Then $\Delta(\uh)$ is a random variable with an unknown cumulative distribution function (CDF). Based on the assumption of Lipschitz continuity of the system, the support of the distribution of $\Delta(\uh)$ is bounded from above, and we want to estimate an upper bound for its support. We take $\mathfrak n$ sample pairs $(x,w)$ and $(x',w')$, and compute $\mathfrak n$ samples $\Delta_1,\Delta_2,\ldots,\Delta_{\mathfrak n}$ for $\Delta(\uh)$. The CDF of $\max\{\Delta_1,\Delta_2,\ldots,\Delta_{\mathfrak n}\}$ is called the limit distribution of $\Delta(\uh)$.
Fisher-Tippett-Gnedenko theorem says that if the limit distribution exists, it can only be one of the three family of extreme value distributions -- the Gumbel class, the Fr\'echet class, and the reverse Weibull class.
These CDF's have the following forms:
\begin{align*}
& \text{Gumbel class: } \quad G(s) = \exp\left[-\exp\left[\frac{s-a}{b}\right]\right],\, s \in \mathbb{R}\\
& \text{Fr\'echet class: }\quad G(s) = \begin{cases} 
0 & \text{ if } s < a \\
\exp\left[-[\frac{s - a}{b}]^{-c}\right] & \text{ if } s \leq a
\end{cases}\\
& \text{Reverse Weibull class: }
G(s) = \begin{cases}
\exp\left[-[\frac{a - s}{b}]^{c}\right] & \text{ if } s < a \\
1 & \text{ if } s \leq a \\
\end{cases}
\end{align*}
where $a\in \mathbb{R}, b > 0, c > 0$ are respectively the location, scale and shape parameters of the distributions.

Among the above three distributions, only the reverse Weibull class has a support bounded from above. Therefore, the limit distribution of $\Delta(\uh)$ will be from this class and the location parameter $a$ is such an upper bound. As a result, we can estimate the location parameter of the limit distribution of $\Delta(\uh)$ to get an estimation of the Lipschitz constant.

The approach is summarised in Algorithm~\ref{alg:ld_est}. The most inner loop computes samples of $\Delta(\uh)$. The middle loop computes samples of $\max\{\Delta_1, \ldots, \Delta_{\mathfrak n}\}$. The outer loop estimates the Lipschitz constant for each $\uh$ by fitting a reverse Weibull distribution.
 
\begin{algorithm}
	\caption{Lipschitz Constant Estimation}
	\label{alg:ld_est}
	\KwData{$(X, U, W)$ of a control system $\Sigma$, abstract input space $\Uh$}
	%Set $n = 100$ and $m = 1000$
	Select number of samples $\mathfrak n$ and $\mathfrak m$ for the estimation\\
	Select $\delta>0$\\
	\For{$\uh \in \Uh$}{
		\For{$j = 1 : \mathfrak m$}{
			\For{$i = 1 : \mathfrak n$}{
				Sample pairs $(x,w), (x',w')$ uniformly from the domain in \eqref{eq:domain}\\
				Run $\Sigma$ to get trajectories $\Sol(x,\uh,w)$ and $\Sol(x',\uh,w')$\\
				Compute $\Delta_i := \frac{\|\Sol(x,\uh,w)-\Sol(x',\uh,w')\|}{\|(x,w) - (x',w')\|}$
			}
			$\Gamma_j := \max\{\Delta_1, \ldots, \Delta_{\mathfrak n}\}$
		}
		Fit a reverse Weibull distribution to the sample set $\{\Gamma_1,\Gamma_2,\ldots,\Gamma_{\mathfrak m}\}$\\
		$L_\Sol(\uh)$ is the location parameter of the fitted distribution
		%LCE $L_{\Sol,\uh}$ is location parameter of fitted RWD to $F_s^n(s)$ for each input $\uh$
		
	}
	\KwResult{Estimated value of $L_\Sol(\uh)$ for all $\uh\in\Uh$}
\end{algorithm}

%\begin{remark}
%\BW{The value of $L_d$ in Algorithm~\ref{alg:ld_est} can be of great impact on equation \eqref{eq:relaxing_constant}. In some experimental examples we noticed the value of $L_d$ would have greater impact on the bias term $\gamma$ than both $\beta$ and $\epsilon$.}
%\end{remark}

% !TEX root = main.tex

\section{Synthesis via Abstraction Refinement}
\label{sec:refinement}

% That being said, it is clear that using the adaptive growth bound is superior as ABCD is an iterative process which starts with coarse discretisations and continues with reducing size of discretisations. However, this benefit comes at the cost of larger growth bound, which can result in smaller winning region size for the synthesised controller. Algorithm~\ref{alg:main_adaptive_synthesis} shows steps which should be taken in order to solve Problem~\ref{prob:synthesis}.

The data-driven synthesis discussed in Section~\ref{sec:main_results} inherits the soundness property from the ABCD approach: they both work with overapproximations of the dynamics and may not return a controller despite one may exists. Therefore, there is a need for refining the abstraction in order to check for controllers using less conservative abstractions. While the method of Section~\ref{sec:main_results} is good for a given fixed discretisation parameter $\eta_x$, it is not suitable for reducing $\eta_x$, which requires re-computing all local parameters of the growth bounds $\theta_1(\xh,\uh),\theta_2(\xh,\uh)$. Another shortcoming of the method is related to the data collection: the nominal trajectories of the system should be available and are used in the constraints of the SCP.
%However, this assumption does not hold in practice as we have no control on choosing the disturbance signal.
In this section, we discuss an extension of the approach of Section~\ref{sec:main_results}, in order to
\begin{itemize}
\item enable reducing $\eta_x$ without the need for re-computing the growth bound, and
\item relax the assumption of having access to the nominal trajectories of the system.
\end{itemize}

%In this paper, we consider two different characterizations of the growth bound. The first characterization is suitable when the state-space discretization parameter $\eta_x$ is fixed. In this case, we consider a \emph{fixed} growth bound $\kappa_f\colon\reals^n_{>0}\times \Xh\times\Uh\rightarrow\reals^n_{>0}$

%Although the above characterization is simple, it requires recomputing of growth bound after every change in $\eta_x$. This can be problematic as ABCD starts with a coarse discretization and iteratively reduces the size of the discretization. It would be desirable if we could use a growth bound which is reusable for different discretization sizes. This way, we could compute the growth bound once and only for the coarsest discretization size and reuse it for finer sizes.
%We first define a new growth-bound function which once computed for a coarse discretisation, can be re-used for finer discretisations.
%\begin{definition}
%	\label{def:adaptive_GB}
%	We define a \emph{refinement enabling growth bound} (REGB) for a control system $\Sigma$ with abstract state and input spaces $\Xh,\Uh$ asa function $\kappa_e\colon\reals^n_{\ge0}\times \Xh\times\Uh\rightarrow\reals^n_{\ge 0}$ which is strictly increasing in its first argument and satisfies
%	\begin{align}\label{eq:growth_bound_def2}
%		&\forall \xh \in \Xh,\: \forall \uh \in \Uh,\: \forall x_1,x_2\in\ball_{\eta_x}(\xh),\:\forall w_1,w_2\in W\nonumber\\
%		&|\Sol(x_1,\uh,w_1)-\Sol(x_2,\uh,w_2)|\leq \kappa_e(|x_1-x_2|,\xh,\uh).
%	\end{align}
%\end{definition}

Let us define a modified growth bound as a function $\kappa_e\colon\reals^n_{\ge0}\times \Xh\times\Uh\rightarrow\reals^n_{\ge 0}$ that is strictly increasing in its first argument and satisfies
	\begin{align}%\label{eq:RCP_ABCD_A}
		&|\Sol(x_1,\uh,w_1)-\Sol(x_2,\uh,w_2)|\leq \kappa_e(|x_1-x_2|,\xh,\uh)\nonumber\\
		&\,\,\forall \xh \in \Xh,\: \forall \uh \in \Uh,\: \forall x_1,x_2\in\ball_{\eta_x}(\xh),\:\forall w_1,w_2\in W.\label{eq:growth_bound_def2}
	\end{align}
This definition is more conservative than \eqref{eq:growth_bound_def1} in comparing trajectories under two arbitrary disturbances, and we always have that $\kappa_e$ satisfies \eqref{eq:growth_bound_def1}.
Using this new definition, for every pair of abstract state and input $(\xh,\uh)$, the corresponding overapproximation of the reach set can be computed as a ball centred at \emph{any} $z(\xh,\uh)\in\Phi(\xh,\uh)$ with radius $\lambda(\xh,\uh)=\kappa_e( \eta_x,\xh,\uh)$.

 %Note that using $\kappa_a(\cdot,\cdot)$ yields a larger over-approximation compared to that corresponding to $\kappa_f(\cdot)$. In the next section, we explain the benefits for using each one of the above alternatives.

%We are interested in data-driven computation of the growth-bound function when the nominal dynamic $f$ is unknown. To that end, we need to specify our date collection setting. %Our formulation is motivated by the Eq.~\eqref{eq:reissig_growth_bound}. 
%When size of the discretization parameter is fixed, for every pair of abstract state and input $(\xh,\uh)\in\Xh\times\Uh$, the growth-bound function can take the form
%\begin{equation}\label{eq:fixed_data_driven_GB}
%	\kappa_f(\theta)(\xh,\uh) = \theta,
%\end{equation}
%where %$x_c\in\reals^n$ is the centre of discretization partition and  
%$\theta\in\reals^{n}$ is a vector which must be learned.
%we need to  construct a sound abstraction for the control system associated with $f$ one needs to find an over-approximating function $\kappa$ which puts a bound over the difference between the nominal and perturbed trajectories of the system. 

%\noindent\textbf{Data collection from the system.} 
%We are interested in data-driven computation of the growth-bound function when the nominal dynamic $f$ is unknown. To that end, 
we choose a parametrisation for $\kappa_e$ similar to \eqref{eq:data_driv_GB}, i.e.,
\begin{equation}
	\label{eq:REGB_parametrization}
	\kappa_e(\theta)(r,\xh,\uh) = \theta_1(\xh,\uh) r+\theta_2(\xh,\uh),
\end{equation}
where  $r\in\reals_{\ge 0}$, $\theta_1\in\reals^{n\times n}$, $\theta_2\in\reals^{n}$, and $\theta\in\reals^{n^2+n}$ is constructed by concatenating columns of $\theta_1$ and $\theta_2$.
%To propose a data-driven method for computing the growth bound introduced in Definition~\ref{def:adaptive_GB}, fixing $\xh\in\Xh$ and $\uh\in\Uh$, 
The SCP associated with this growth bound is constructed by considering a uniform distribution over $\ball_{\eta_x}(\xh)\times W$ and obtain $2N$ i.i.d. sample trajectories $\mathcal S_{2N}=\set{(x_i,\uh_i,x'_i)\,|\, x'_i \in \Phi(x_i,\uh), i=1,2,\ldots,2N}$ so that every $x_i'$ corresponds to a random disturbance $w_i\in W$. The modified $\text{SCP}_\gamma$ is defined as
%\text{SCP}_\gamma\colon
\begin{equation}
	\label{eq:SCP_ABCD_AA}
	\begin{cases}
		\min c^\top\theta\\
		s.t.\;0\leq\theta\leq\bar\theta\;\text{and }\forall i\in\{1,\dots, N\} \nonumber\\
		|x'_{2i-1}-x'_{2i}|- \theta_1(\xh,\uh)|x_{2i-1}-x_{2i}|-\theta_2(\xh,\uh)+\gamma	\leq 0 
	\end{cases}
\end{equation}
where $c=[1,1,\dots,1]\in\reals^{n^2+n}$ is a constant vector, $\bar\theta\in\reals_{>0}^{n^2+n}$ is sufficiently large, and $\gamma\ge 0$. 

\begin{comment}
For every $\xh\in\Xh$ and $\uh\in\Uh$, the inequality in Eq.\eqref{eq:growth_bound_def2} can be written as the RCP
\begin{align}\label{eq:RCP_ABCD_A}
	&\min_{\theta\in[\boldsymbol{0},\bar\theta]} c^\top\theta\nonumber\\
	&s.t.\quad|\Sol(x_1,\uh,w_1)-\Sol(x_2,\uh,w_2)|- \kappa_e(\theta)(|x_1-x_2|,\xh,\uh)\leq 0\nonumber\\ 
	&\qquad\forall x_1,x_2\in\ball_{\eta_x}(\xh),\:\forall w_1,w_2\in W,
\end{align}
where $c=[1,1,\dots,1]\in\reals^{n^2+n}$ is a constant vector, $\bar\theta\in\reals_{>0}^{n^2+n}$, and $\kappa_e(\theta)$ takes the form in Eq.~\eqref{eq:REGB_parametrization}. Again, restricting $\theta$ to be non-negative is sensible due to the results of Theorem~\ref{thm:RCP_GB}.
\end{comment}
%The following theorem states a new method for solving Problem~\ref{prob:abstraction}.
\begin{theorem}
	\label{thm:refined_abstraction}
	For any $\xh\in\Xh$ constructed with the discretisation size $\eta_x$, any $\uh\in\Uh$, and the disturbance set $W = [-\bar w,\bar w]$, the optimal solution of \eqref{eq:SCP_ABCD_AA} gives a growth bound for the system $\Sigma$ corresponding to $(\xh,\uh)$ that satisfies \eqref{eq:growth_bound_def2} with confidence $(1-\beta)$, when the number of samples $2N\ge N(\varepsilon,\beta)$ and
	\begin{equation}\label{eq:bias_def_A}
		\gamma=8L_\varphi\sqrt[4n]{ \varepsilon\left[\prod_{i=1}^{n}\eta_x(i)\prod_{i=1}^{n}\bar w(i)\right]^2},
	\end{equation}
	where $\varepsilon\in[0,1]$,  %$L_d$ \BW{from Algorithm~\ref{alg:ld_est}},
	,$n$ is the dimension of the state space, %\Mahmoud{$\Theta=\|\bar \theta\|$ in Eq.~\eqref{eq:SCP_ABCD_F},
	and $L_\Sol(\uh)$ is the Lipschitz constant of the system trajectories $\Sol(x,\uh,w)$ with respect to $(x,w)$.%and \Mahmoud{$L_d$ is an upper-bound over the Lipschitz constant of $X\times W\ni(x,w)\rightarrow g(x,\theta,w)$ uniformly over $\theta$.}
\end{theorem}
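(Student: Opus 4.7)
The plan is to adapt the proof of Theorem~\ref{thm: sound_data_driven_abstraction} to the modified growth bound \eqref{eq:growth_bound_def2}, where two perturbed trajectories are compared directly rather than one trajectory being compared against the nominal. The quantitative differences with respect to the previous argument are twofold: the uncertainty space doubles in dimension (from $2n$ to $4n$), and the Lipschitz constant of the constraint function with respect to the uncertainty also doubles.

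First, I would verify that \eqref{eq:SCP_ABCD_AA} is a genuine SCP associated with an RCP. For fixed $(\xh,\uh)\in\Xh\times\Uh$, set the uncertainty space $\D = \ball_{\eta_x}(\xh)\times W\times \ball_{\eta_x}(\xh)\times W$ with uniform distribution, and define
\begin{equation*}
g(\theta,d):=\max\{|\Sol(x_1,\uh,w_1)-\Sol(x_2,\uh,w_2)|-\theta_1(\xh,\uh)|x_1-x_2|-\theta_2(\xh,\uh)\}
\end{equation*}
for $d=(x_1,w_1,x_2,w_2)\in\D$, where the maximum is taken component-wise. Convexity (linearity) in $\theta$ and boundedness in $d$ on the compact set $\D$ follow exactly as in Theorem~\ref{thm:RCP_GB}. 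Pairing the $2N$ trajectories in $\mathcal S_{2N}$ into $N$ constraints corresponds to drawing $N$ i.i.d.\ samples from $\D$, so Theorem~\ref{thm:sample_complexity} applies with decision dimension $q=n^2+n$.

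Next, I would instantiate $h(\varepsilon)$ using the volume of the infinity-norm ball of radius $\varepsilon/2$ relative to the total volume of $\D$ in $4n$ dimensions, obtaining
\begin{equation*}
h(\varepsilon)=\frac{(\varepsilon/2)^{4n}}{\left[\prod_{i=1}^n\eta_x(i)\prod_{i=1}^n\bar w(i)\right]^2},\qquad h^{-1}(\varepsilon)=2\sqrt[4n]{\varepsilon\left[\prod_{i=1}^n\eta_x(i)\prod_{i=1}^n\bar w(i)\right]^2}.
\end{equation*}
The main computation is the Lipschitz constant $L_d$ of $d\mapsto g(\theta,d)$, uniformly in $\theta$. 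By the triangle inequality and \eqref{eq:Lipschitz}, for $d,d'\in\D$,
\begin{align*}
\|g(\theta,d)-g(\theta,d')\|
&\le \|\Sol(x_1,\uh,w_1)-\Sol(x_1',\uh,w_1')\|+\|\Sol(x_2,\uh,w_2)-\Sol(x_2',\uh,w_2')\|\\
&\quad +\|\theta_1(\xh,\uh)\|(\|x_1-x_1'\|+\|x_2-x_2'\|)\\
&\le 2L_\Sol(\uh)\|d-d'\|+2\|\theta_1(\xh,\uh)\|\|d-d'\|,
\end{align*}
and since any feasible $\theta_1(\xh,\uh)$ can be bounded by $L_\Sol(\uh)$ as in the proof of Theorem~\ref{thm: sound_data_driven_abstraction}, we obtain $L_d=4L_\Sol(\uh)$. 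Plugging into $\gamma=L_d\,h^{-1}(\varepsilon)$ yields exactly the expression in \eqref{eq:bias_def_A}.

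The step that requires the most care is justifying the bound $\|\theta_1(\xh,\uh)\|\le L_\Sol(\uh)$ in this symmetric setting, where neither $x_1$ nor $x_2$ is distinguished as a nominal initial point. The cleanest way is to restrict $\bar\theta$ so that $\bar\theta_1\le L_\Sol(\uh)\mathbf{1}_{n\times n}$; any smaller choice of $\theta_1$ is feasible as long as the sampled trajectories respect \eqref{eq:Lipschitz}, and an optimal growth bound of the form \eqref{eq:reissig_growth_bound} always satisfies this entrywise (since $\theta_1=e^{L(\uh)\tau}$ is dominated by the one-step Lipschitz expansion). Once this is in place, Theorem~\ref{thm:sample_complexity} delivers feasibility of $\theta^\ast_{SCP}$ for the underlying RCP (and therefore satisfaction of \eqref{eq:growth_bound_def2}) with confidence $(1-\beta)$ whenever the number of i.i.d.\ pairs is at least $N(\varepsilon,\beta)$, matching the sample complexity stated in the theorem.
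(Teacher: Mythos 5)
Your proposal follows essentially the same route as the paper's proof: the same uncertainty space $(\ball_{\eta_x}(\xh)\times W)^2$ with uniform distribution, the same choice of $h(\varepsilon)=(\varepsilon/2)^{4n}/\bigl[\prod_{i=1}^n\eta_x(i)\prod_{i=1}^n\bar w(i)\bigr]^2$, and the same triangle-inequality computation yielding $L_d=4L_\Sol(\uh)$ and hence $\gamma=8L_\Sol(\uh)\sqrt[4n]{\varepsilon[\cdot]^2}$. Your added care in justifying the bound $\|\theta_1(\xh,\uh)\|\le L_\Sol(\uh)$ and in pairing the $2N$ trajectories into $N$ i.i.d.\ samples only makes explicit what the paper asserts without elaboration; the argument is the same.
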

\begin{proof}
	The proof of this theorem is similar to that of Theorem~\ref{thm: sound_data_driven_abstraction}.
	Define \begin{align*}
		g(\theta,x_1,w_1,x_2,w_2)\!:=\!\max\{&|\Sol(x_1,\uh,w_1)-\Sol(x_2,\uh,w_2)|\\
		&- \theta_1(\xh,\uh)|x_1-x_2| - \theta_2(\xh,\uh)\}.
		\end{align*}
		%Theorem~\ref{thm:RCP_GB}.
		To satisfy the inequality \eqref{eq:inequality1}, we can choose
		\begin{equation*}
		h(\varepsilon) = \prob(\ball_\varepsilon(d)) = \frac{(\varepsilon/2)^{4n}}{[\prod_{i=1}^{n}\eta_x(i)\prod_{i=1}^{n}\bar w(i)]^2},
		\end{equation*}
	since the distribution on $(\ball_{\eta_x}(\xh)\times W)^2$ is uniform. 
	Using Equation~\eqref{eq:relaxing_constant}, we have $\gamma = L_d h^{-1}(\varepsilon)$. In order to prove that $\gamma$ takes the value in \eqref{eq:bias_def_A}, we must show that $g$ is Lipschitz continuous with constant $L_d=4L_\Sol(\uh)$. Bounding $\|\theta_1(\xh,\uh)\|$ by $L_\Sol$, for all $(x_{1},w_{1},x_{2},w_{2})$ and $(x_{1}',w_{1}',x_{2}',w_{2}')$ we have
		\begin{align*}
			& \|g(\theta,x_{1},w_{1},x_{2},w_{2}) - g(\theta,x_{1}',w_{1}',x_{2}',w_{2}')\|\\
			& \le\|\Sol(x_{1},\uh,w_{1}) - \Sol(x_{1}',\uh,w_{1}')\|\\
			&+\|\Sol(x_{2},\uh,w_{2}) - \Sol(x_{2}',\uh,w_{2}')\|\\
			 &+\|\theta_1(\xh,\uh)\|(\|x_{1}-x_{1}'\|+\|x_{2}-x_{2}'\|)\\
			& \le 4L_\Sol(\uh) \|(x_{1},w_{1},x_{2},w_{2}) - (x_{1}',w_{1}',x_{2}',w_{2}')\|.
	\end{align*}
Therefore, $g$ is Lipschitz continuous with constant $4L_\Sol(\uh)$.
	This completes the proof.
\end{proof}
A statement similar to Corollary~\ref{cor:conf} holds for the growth bound computed using \eqref{eq:SCP_ABCD_AA}.

\section{Experimental Evaluation}
\label{sec:experiments}
% L_d vs L_\Sol
To demonstrate our approach, we apply it to a DC-DC boost converter and a path planning problem. 
These case studies are taken from \cite{rungger2016scots, girard2009approximately} and will be used as black-box models to generate sample trajectories.
We also introduce a case study from power systems based on \cite{MaFan2016}, that is implemented in the Power System Toolbox (PST) \cite{ChowCheung1992}. We will will use trajectories from the black-box reduced model of the 30 state power system model. We apply our approach to construct finite abstractions of these systems and employ SCOTS \cite{rungger2016scots} to design controllers.
%
%In terms of comparison, we compare the performance of controller achieved by the data-driven approach to the model-based counterpart. We use SCOTS \cite{rungger2016scots} for the model-based approach. Given a model of the system, SCOTS is able to find a growth bound model and subsequently a feasible controller to ensure the satisfaction of the specification. However, in the data-driven approach we only have access to the trajectories of the system. so, we generate a growth bound model for the system using sampling. 
Our algorithms are implemented in C++ on a 64-bit Linux cluster machine with two Intel Xeon E5 v2 CPUs, 1866 MHz, and 50GB RAM. %(\Sadegh{256GB (16x16) DDR3,1866 MHz, ECC}).
%\Mahmoud{We need to report the estimated value of $L_d$ for every case-study. Note that $L_d$ estimation algorithms would be slightly different for the fixed and adaptive discretization cases.}\BW{Done.}

\subsection{DC-DC boost converter}
The objective in the DC-DC boost converter problem is to design a controller to enforce a reach and stay specification. The DC-DC boost converter can be modelled as a two dimensional linear switching system with two functional modes. The state vector of the system at time $t\in\reals_{\ge 0}$ is $x(t) = (i_l(t),v_c(t))$, where $i_l$ is the inductor current and $v_c$ is the capacitor voltage. The system's evolution can be controlled by selecting the appropriate mode $u(t)\in\set{1,2}$ at every time $t\in\reals_{\ge 0}$. The system's dynamics under the two modes can be represented as $\dot{x} = A_{u(t)} x(t)+ b+cw(t)$, $u\in\{1, 2\}$, with matrices $A_1,A_2,b,c$ as reported in \cite{girard2009approximately}. The state and input spaces are $X= [0.65, 1.65]\times[4.95, 5.95]$ and $U= [1, 2]$. %$w$ represents the disturbance with the upper bound of $0.01$. 
The initial state is $(i_{l_0}(t),v_{c_0}(t))=(0.7, 5.4)$ and the target set is $[1.1, 1.6]\times[5.4, 5.9]$. The target set is shown in red colour in \Cref{fig:dcdc-nd}.
%Our approach does not need the details of the dynamics and extracts only sample trajectories.

Our implementation results are reported in Table~\ref{tab:dcdc} for the system without disturbance ($\bar w = (0,0)$) and with disturbance bound $\bar w = (0.01,0)$. These results are obtained with discretisation parameters $\eta_x=(0.005, 0.005)$ and $\eta_u= 1$, confidence parameter $\beta = 0.01$, $\varepsilon = 0.01$ and estimation for $L_{\Sol} = 0.9935$. The resulted finite abstraction has cadinalities $n_x = 40,000$ and $n_u = 2$. The required number of sample trajectories, $N$, for each $(\xh,\uh)\in\Xh\times\Uh$ is computed using equation \eqref{eq:sample_complexity}. Runtimes and the resulting winning region sizes, $|\mathcal V|$, for the DC-DC boost converter are given in Table~\ref{tab:dcdc}.
%As it can be seen, the number of sample trajectories is much smaller for disturbance-free dynamics due to a lower sampling dimension in the associated SCP.
% !TEX root = main.tex

\begin{table}
	%\large
	%\centering
	%\caption{Number of sample trajectories $N$ for each pair $(\xh,\uh)$, run times in minutes for synthesising controllers, and the size of the winning region $|\mathcal V|$. We have used a fixed discretisation in Algorithm~\ref{alg:main_abstraction} both without and in presence of disturbance.
			%Dimension of case studies both for state space ($|X|$) and input space ($|U|$), 
			% For local ABCD, the reported numbers correspond to the maximum value among all of the agents.
		
	%}
	\caption{Results for the DC-DC boost converter. %\Mahmoud{For the first three tables, I think it makes more sense to report $|\Xh\times\Uh|$ (or $n_x$ and $n_u$) instead of dimensions for $X$ and $U$.}\BW{done in all sections.}
	}
	\renewcommand{\arraystretch}{1.2}
	\setlength{\tabcolsep}{0.7em} % for the horizontal padding
	\resizebox{1\columnwidth}{!}{
	\begin{tabular}{l|cc|c|ccc}%|ccc}
		\toprule
		Case-study&\multicolumn{2}{c|}{Dimension}& Disturbance& 
		\multicolumn{3}{c}{Fixed Discretisation}\\%&\multicolumn{3}{c}{Iterative Refinement}\\
		
		&$X$&$U$&$W$&$N$& time (min)&$|\mathcal V|$\\%&$N$&$t$&$|\text{Win}|$\\
		\midrule
		\multirow{2}{*}{\rotatebox{0}{DC-DC boost converter}} &\multirow{2}{*}{\rotatebox{0}{$2$}}&\multirow{2}{*}{\rotatebox{0}{$1$}}&$\{0\}$& $1,807$&$22.2$  & $37,783$\\%&$???$&$????$&$183820$\\
		&&&$[-0.01,0.01]$& $2,285$&$30.6$  & $37,414$\\%&$???$&$???$&$???$\\		
		%\midrule
		
		%\multirow{1}{*}{\rotatebox{0}{Power grid}} & $4$&$1$ & $???$&$???$  & $ 4858056$&$???$&$???$&$8714304$\\
		
		\bottomrule
	\end{tabular}}
	\label{tab:dcdc}
\end{table}

%The discretization parameters for the state space and the input space are $\eta_x=10*2/4e3$ and $\eta_u=1$, respectively. For the adaptive approach we iteratively decrease $\eta_x=0.01$ by factor of $2$ for $3$ iterations. The computation time is reported in Table~\ref{tab:runtimes}. 
We have used Algorithm~\ref{alg:main_abstraction} to compute the finite-state abstraction by collecting sample trajectories of the system. Subsequently, SCOTS is used for designing the controller.
The performance of the controller is shown in \Cref{fig:dcdc-nd,fig:dcdc-d} for the system without and with the disturbance. These figures show one sample closed-loop trajectory of the system under the controllers designed by our data-driven ABCD approach. In both cases, without and with disturbance, it can be noticed from \Cref{fig:dcdc-nd,fig:dcdc-d} that our approach has been successful in finding controllers satisfying the given reach and stay specification, despite the the dynamics being unknown.

%The synthesis objective is to regulate the output voltage across the load $r_0$ \Sadegh{unclear sentence.}.
%We solve the synthesis problem using SCOTS as well as the data-driven approach with and without disturbances. So, the results of the data-driven approach are in in Fig.~\ref{fig:dcdc-nd} and Fig.~\ref{fig:dcdc-d}. In both cases the controller is able to satisfy the reach and stay specification; the trajectory of the system reaches and then remains within the small red-bordered region.
	\begin{figure}
	\includegraphics[width=0.9\columnwidth]{./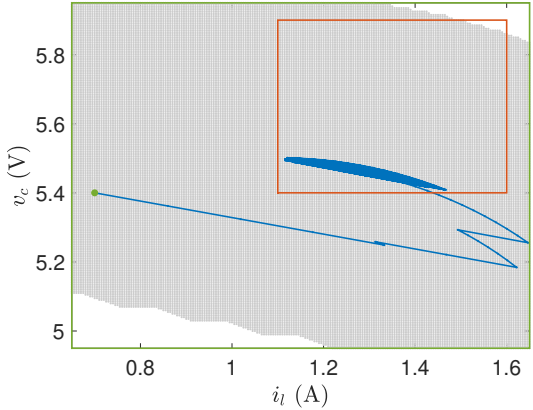}
	\caption{The closed-loop trajectory of the DC-DC boost converter with $\bar w = (0, 0)$ under the controller designed by our data-driven abstraction approach. The rectangle in red colour represents the target region and the area in grey shows the winning region of the controller.}
	\label{fig:dcdc-nd}	
	\end{figure}
	
	\begin{figure}
		\includegraphics[width=0.9\columnwidth]{./}
		\caption{The closed-loop trajectory of the DC-DC boost converter with $\bar w = (0.01,0)$ under the controller designed by our data-driven abstraction approach. The rectangle in red colour represents the target region and the area in grey shows the winning region of the controller.}
		\label{fig:dcdc-d}	
	\end{figure}
	
	%\begin{figure}
	%\includegraphics[width=0.9\columnwidth]{./}
	%\caption{\BW{The closed-loop trajectories of the DC-DC boost converter with $W = \{0\}$ under the controller designed by our data-driven abstraction refinement approach. The rectangle in red colour represents the target region.}}
	%\label{fig:dcdc-nd-ad}	
	%\end{figure}

\subsection{Path Planning Problem with Partition Refinement}
We consider a path planning problem for a vehicle that is modelled as
%
%The bicycle model is used to model the vehicle. The model consists of a three dimensional state space and a two dimensional input space. The state of the system represents the coordinates of the vehicle in a 2-D Cartesian plane and its orientation in this plane. The input space represents the velocity and steering angle. 
%Differential equations of the model are as follows:
\begin{equation}
\label{eq:car}
	\begin{array}{l}
		\dot{x} = v\cos(\alpha+\theta)/\cos(\alpha)+w\\
		\dot{y} = v\sin(\alpha+\theta)/\cos(\alpha)\\
		\dot{\theta} = v\tan(\omega),
	\end{array}
\end{equation}
where the state variables $x, y, \theta$ represent the position of the vehicle in the $2$-dimensional space and the orientation of the vehicle, respectively. Inputs are $(v, \omega)$, the disturbance is $w$, and $\alpha:=\arctan(\tan(\omega)/2)$. %The finite state model of the vehicle represents $\widetilde \Sigma$. 
The state and input spaces are $X = [0, 10]\times[0, 10]\times[-\pi-0.4, \pi+0.4]$ and $U = [-1, 1]^2$, respectively.
%The initial state of the vehicle is \Sadegh{$(x_0,y_0,\theta_0) = ()$}.
The goal is to find a controller to steer the vehicle from the initial state $(x_0,y_0,\theta_0) = (0, 1.2, 0)$ to the target set $ (x,y)\in [9, 9.51] \times [0, 0.51]$ while avoiding the obstacles. These obstacles are shown in blue colour in \Cref{fig:vehicle-nd,fig:vehicle-d}.

We computed the growth bounds with a coarse discretisation $\eta_x=(1.6,1.6,1.6)$ and reduced it iteratively with the factor of two. The algorithm successfully finds a controller for the system after five iterations. The implementation results are reported in Table~\ref{tab:vehicle}. These results are obtained with
%discretisation parameters $\eta_x=(0.2, 0.2, 0.2)$ and
$\eta_u=(0.3, 0.3)$, the confidence parameter $\beta = 0.01$, $\varepsilon = 0.01$ and estimated constant $L_{\Sol} = 1.46$. The resulted abstraction has cardinalities $n_x = 88,500$ and $n_u = 24$.
For the case of disturbance-free model we set $\bar w = (0, 0, 0)$, and for the case of dynamics with disturbance, we set $\bar w = (0.01, 0, 0)$. The required number of sample trajectories for each $(\xh,\uh)$ is computed using Equation~\eqref{eq:sample_complexity} and marked with $N$ in the table. Finally, runtimes and size of the winning regions $|\mathcal V|$ are reported.

% !TEX root = main.tex

\begin{table}
	%\large
	%\centering
	%\caption{Number of sample trajectories $N$ for each pair $(\xh,\uh)$, run times in minutes for synthesising controllers, and the size of the winning region $|\mathcal V|$. We have used a fixed discretisation in Algorithm~\ref{alg:main_abstraction} both without and in presence of disturbance.
			%Dimension of case studies both for state space ($|X|$) and input space ($|U|$), 
			% For local ABCD, the reported numbers correspond to the maximum value among all of the agents.
		
	%}
	\caption{Results for the path planning case study.}
	\renewcommand{\arraystretch}{1.2}
	\setlength{\tabcolsep}{0.7em} % for the horizontal padding
	\resizebox{1\columnwidth}{!}{
	\begin{tabular}{l|cc|c|ccc}%|ccc}
		\toprule
		Case-study&\multicolumn{2}{c|}{Dimension}& Disturbance& 
		\multicolumn{3}{c}{Abstraction Refinement}\\%&\multicolumn{3}{c}{Iterative Refinement}\\
		
		&$X$&$U$&$\bar w$&$N$& time (min)&$|\mathcal V|$\\%&$N$&$t$&$|\text{Win}|$\\
		\midrule
		\multirow{2}{*}{\rotatebox{0}{Path planning}} &\multirow{2}{*}{\rotatebox{0}{$3$}}&\multirow{2}{*}{\rotatebox{0}{$2$}}&$(0, 0, 0)$& $3,127$&$225$  & $405,493$\\%&$???$&$????$&$183820$\\
		&&&$(0.01, 0, 0)$& $4,277$&$513$  & $447,212$\\%&$???$&$???$&$???$\\		
		%\midrule
		
		%\multirow{1}{*}{\rotatebox{0}{Power grid}} & $4$&$1$ & $???$&$???$  & $ 4858056$&$???$&$???$&$8714304$\\
		
		\bottomrule
	\end{tabular}}
	\label{tab:vehicle}
\end{table}

We have used the synthesis method based on abstraction refinement presented in Section~\ref{sec:refinement}, to construct the finite-state abstraction by collecting sample trajectories of the system. We used SCOTS to design the controller fulfilling the given specification. The performance of the controller is shown in \Cref{fig:vehicle-nd,fig:vehicle-d} for the system without and with the disturbance, respectively. These figures compare the closed-loop trajectories of the system under the controllers designed by our data-driven abstraction refinement algorithm approach (black) and by the model-based approach of SCOTS (red). Our data-driven approach successfully finds a controller for the system that satisfies the specification without the need for knowing the dynamics of the system.

	\begin{figure}
		\includegraphics[width=0.9\columnwidth]{./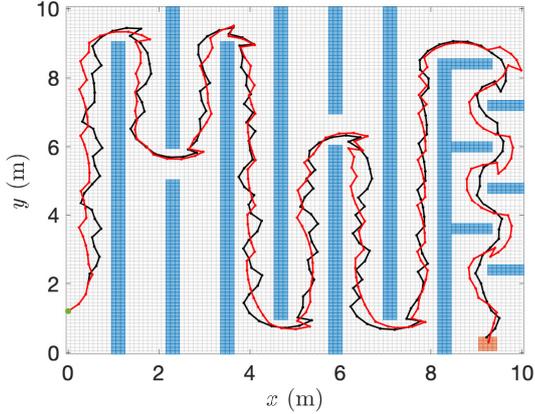}
		\caption{Comparison between the closed-loop trajectories of the system \eqref{eq:car}without disturbance under the controllers designed by our data-driven abstraction refinement approach (black) and by the model-based approach of SCOTS (red). Blue blocks represent the obstacles, the green dot represents the initial state, and the orange rectangle shows the target region.}
		\label{fig:vehicle-nd}
	\end{figure}
	\begin{figure}
	\includegraphics[width=0.9\columnwidth]{./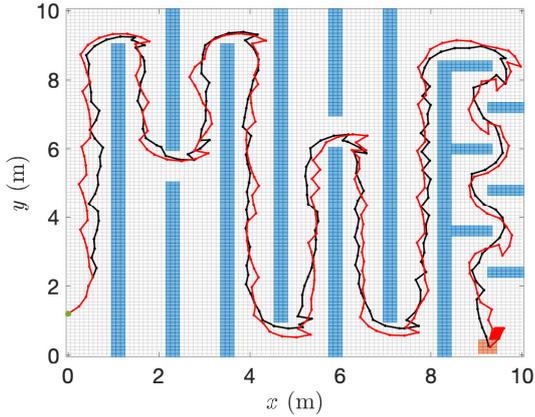}
	\caption{Comparison between the closed-loop trajectories of the system \eqref{eq:car} with disturbance bound $\bar w = (0.01, 0, 0)$ under the controllers designed by our data-driven abstraction refinement approach (black) and by the model-based approach of SCOTS (red).}
	\label{fig:vehicle-d}
	\end{figure}

%for the two cases of dynamics with and without disturbance, i.e., $W = [0,0.005]$ and $W =\{0\}$.  and $\varepsilon = 0.4$ for without disturbance and $\varepsilon = 0.3$ with disturbance and for the adaptive one are $\varepsilon = 0.01$ and $\varepsilon = 0.01$. 
%As it can be seen, the number of sample trajectories is much smaller for disturbance-free dynamics due to a lower sampling dimension in the associated SCP.

%which are significantly lower than the run time reported in Table~\ref{tab:runtimes}.
%This shows the success of abstraction refinement of Algorithm~\ref{alg:main_adaptive_synthesis} in reducing the computation time by a factor of $10$ in this case study.
%It is notable from the Table~\ref{tab:runtimes} that refinement discretisation approach using coarse discretisation can find a satisfactory controller in significantly lower computation cost in terms of time. One can observe that using refinement discretisation approach decrease the computation time by factor of $10$.

%\Sadegh{This paragraph is incomplete. You should add the numbers. computational time etc that we had in the table.} 

\subsection{Three Area Three Machine Power System}

\begin{figure}
 % \centering
  %\begin{minipage}[t]{\columnwidth}
    \centering
    \includegraphics[width=0.9\columnwidth]{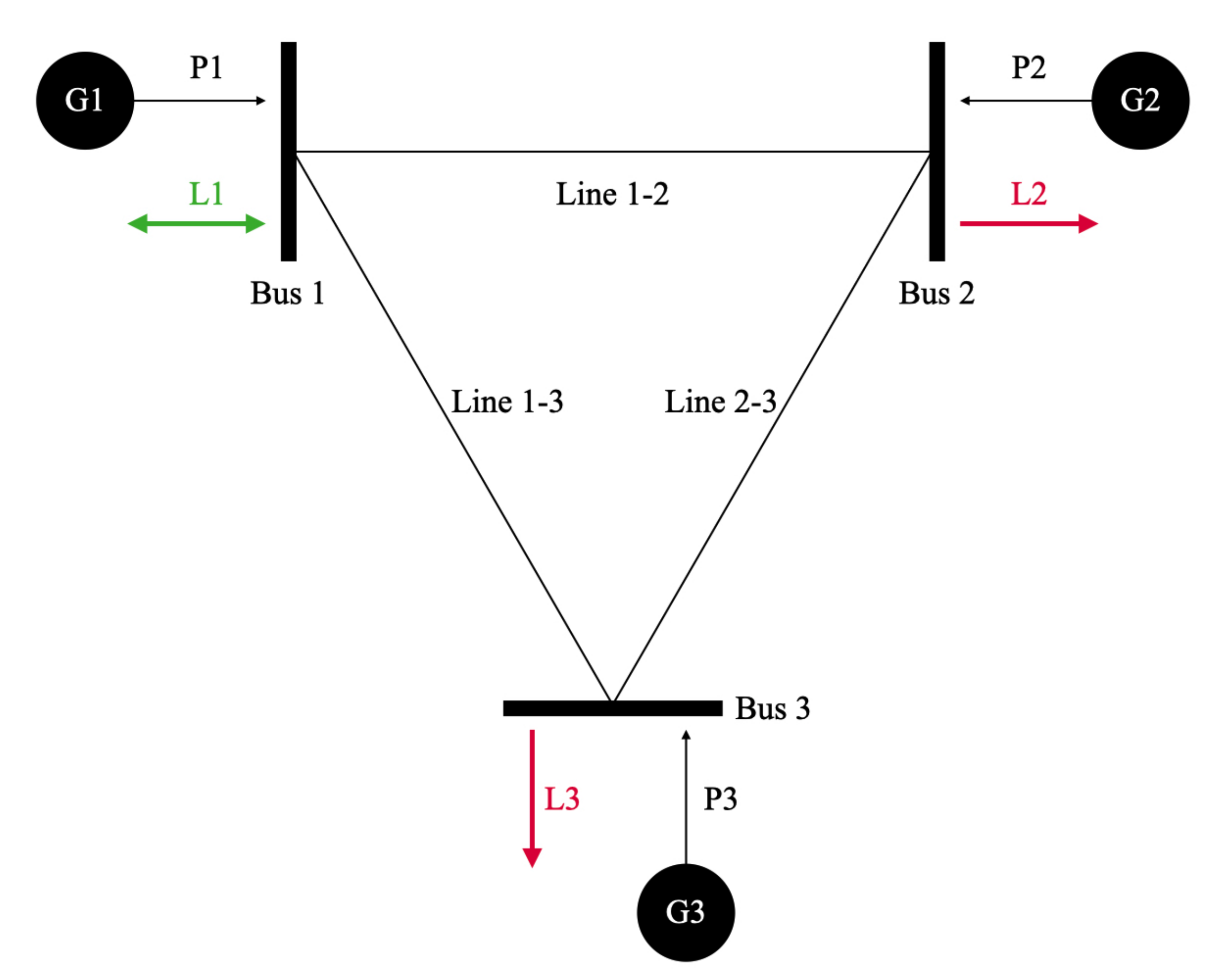}
    \caption{3A3M power system with generators (G) and loads (L). L1 represents a bidirectional load such as Electric Vehicles or Energy Storage Systems.}
    \label{fig:threearea}
  %\end{minipage}
  \end{figure}
  
We consider a three area three machine (3A3M) power system adapted from \cite{MaFan2016} and is shown in \Cref{fig:threearea}. The system consists of three buses, which are each connected to a power source (generator) and a load. At bus $1$ we consider a load which is bidirectional, meaning it can both draw power and inject power into the system. The loads at buses $2$ and $3$ can only draw power from the system; when these loads increase, more power will be drawn from the system, causing an imbalance between generation and consumption which may result in reduction of the network frequency. The nominal frequency of the network is set to $60$ Hz.

We consider a worst case scenario when a sudden increase occurs in the loads at buses $2$ and $3$ by $0.2$ and $0.3$ per unit (pu), respectively. The control task is for the load at bus $1$ to balance the load increase at buses $2$ and $3$ by either reducing its load or injecting power into the network.
%(Note: mathematically these two methods will have the same effect on the network).
The simulation is run using PST on a $30$ state model of this power system. Balanced realisation of the system reduces its dynamics to three states. To compute the data-driven finite abstraction, sample trajectories are gathered using a black-box approach of the reduced system representation for the original model.
The dynamics of the reduced system are given by
\begin{equation}
	\begin{array}{l}
    \dot{x} = Ax + Bu + Ew \\
        y = Cx,
        \end{array}
\end{equation}
where
\begin{equation}
    A = \begin{bmatrix}
        0.00027563  & 0 & 0\\
        0 & -0.3951 &  0.687\\
        0  &  -0.6869 & -0.016
        \end{bmatrix} \nonumber
\end{equation}

\begin{equation}
    B = \begin{bmatrix}
        0.00031166\\
        0.1359 \\
        0.0230 
        \end{bmatrix} \nonumber
\end{equation}
   
\begin{equation}
    E = \begin{bmatrix}
        0.00033103  & 0.00031244\\
        0.1309 & 0.1308\\
        0.0250  &  0.0233
        \end{bmatrix} \nonumber
\end{equation}

\begin{equation}
    C = \begin{bmatrix}
        -0.0115 & -0.2296 & 0.0412 \\
                \end{bmatrix}.
\end{equation}

The state and input spaces are $X = [-0.02,0.02]\times[-0.05,0.05]\times[-0.12, 0.12]$ and $U = [0,0.5]$. Further, we set $W =[-0.2, 0.2]\times[-0.3,0.3]$, $\eta_u = 0.025$, $\tau = 0.4$, $\eta_x = (0.0015, 0.0015, 0.0015)$, $\beta = 0.01$ and $\varepsilon = 0.01$. The resulted abstraction has $n_x = 228,480$ and $n_u = 20$. The estimated Lipschitz constant is $L_{\Sol} = 1.5715$.
%for the fixed discretization, and $L_{\Sol} = 3.143$ for the adaptive refinement method.
The target set is given by $-0.008 < y < 0.008$ and the avoid set is given by $y < -0.015$. Multiplying by the nominal frequency to get the specification in Hertz, the target region is $[59.52,60.48]$ and the avoid region is $(-\infty,59.1)$. \Cref{fig:noInput} shows that the specification is violated when no control is applied.
\begin{figure}
 % \centering
  %\begin{minipage}[t]{\columnwidth}
    \centering
    \includegraphics[width=0.9\columnwidth]{./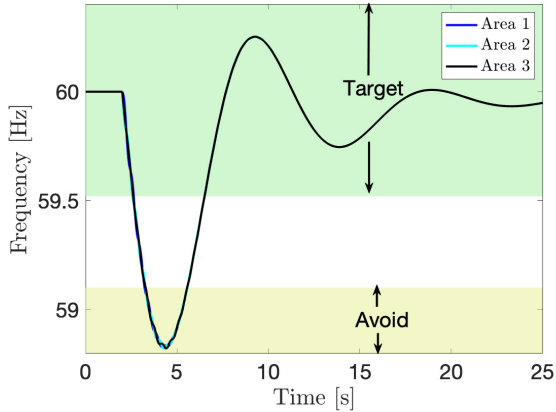}
    \caption{3A3M power system frequency without applying any control input.
    %for three areas, with the frequency of an area given by the measured frequency of the corresponding bus in that area.
    The frequency falls below $59.1$ Hz thus violates the specification.
    }
    \label{fig:noInput}
  %\end{minipage}
  \end{figure}

We apply the data-driven approaches of Section~\ref{sec:main_results} (fixed discretisation) and Section~\ref{sec:refinement} (abstraction refinement).
Both controllers are synthesised with disturbance $W =[-0.2, 0.2]\times [-0.3, 0.3]$. 
A comparison of the two control approaches is shown in Table~\ref{tab:3A3M}.
The required number of sample trajectories for each $(\xh,\uh)$ is computed using equation \eqref{eq:sample_complexity} and marked with $N$ in the table. 
The abstraction refinement starts with $\eta_x=0.012$ and refines the discretisation iteratively with a factor of two. The algorithm successfully finds a controller after five iterations.
The runtimes and the resulting winning region sizes $|\mathcal V|$ are also given in Table~\ref{tab:3A3M}. The abstraction refinement synthesises the controller a factor of $100$ times faster than the fixed discretisation by iteratively decreasing the value of $\eta_x$.
%This approach is therefore promising for systems which would be otherwise difficult synthesise with SCOTS due to scalability. If an appropriate coarse discretisation is chosen the iterative approach can find a controller without requiring heavy computational runtimes. The challenge will be choosing an appropriate discretisation for this task.

% !TEX root = main.tex

\begin{table}[h]
	%\large
	%\centering
	%\caption{Number of sample trajectories $N$ for each pair $(\xh,\uh)$, run times in minutes for synthesising controllers, and the size of the winning region $|\mathcal V|$. We have used a fixed discretisation in Algorithm~\ref{alg:main_abstraction} both without and in presence of disturbance.
			%Dimension of case studies both for state space ($|X|$) and input space ($|U|$), 
			% For local ABCD, the reported numbers correspond to the maximum value among all of the agents.
		
	%}
	\caption{Results for the 3A3M power system. %\Mahmoud{Comparing the winning domain sizes only makes sense over equal discretization sizes.} \BW{Done.}
	}
	\renewcommand{\arraystretch}{1.2}
	\setlength{\tabcolsep}{0.7em} % for the horizontal padding
	\resizebox{1\columnwidth}{!}{
	\begin{tabular}{l|cc|c|ccc}%|ccc}
		\toprule
		Control Approach&\multicolumn{2}{c|}{Dimension}& Disturbance& 
		\multicolumn{3}{c}{}\\%&\multicolumn{3}{c}{Iterative Refinement}\\
		
		&$X$&$U$&$\bar w$&$N$& time (min)&$|\mathcal V|$\\%&$N$&$t$&$|\text{Win}|$\\
		\midrule
		\text{Fixed Discretisation} &\multirow{2}{*}{\rotatebox{0}{$3$}}&\multirow{2}{*}{\rotatebox{0}{$1$}}&$(0.2,0.3)$& $3,290$&$5,253$  & $230,760$\\%&$???$&$????$&$183820$\\
		\text{Adaptive Refinement}&&&$(0.2,0.3)$& $4,460$&$50.25$  & $314,802$\\%&$???$&$???$&$???$\\		
		%\midrule
		
		%\multirow{1}{*}{\rotatebox{0}{Power grid}} & $4$&$1$ & $???$&$???$  & $ 4858056$&$???$&$???$&$8714304$\\
		
		\bottomrule
	\end{tabular}}
	\label{tab:3A3M}
\end{table}

The data-driven control approach with fixed discretisation is simulated in PST and is reported in \Cref{fig:frequency,fig:input}.
%The controller is designed to only give an input when the frequency is outside of the target frequency $60 \pm 0.48$ Hz.
The controlled system successfully keeps the frequencies of the three areas outside of the avoid set (i.e., always above $59.1$ Hz) and bring them back to the target set (i.e., above $59.52$ Hz).
%It will satisfy the specification to avoid falling below the avoid region $59.1$ Hz and to return to the target region. It can be seen that this specification is held for each area of the power system.
\Cref{fig:input} shows the load changes in the system. Load at bus $1$ is able to maintain the frequencies of the three areas above the avoid region and facilitate the system returning to the target set for the maximum disturbances applied at buses $2,3$.
%After some time, the (slower) secondary frequency control aspects of the generators correct the frequency back to the nominal frequency.
\Cref{fig:frequency_ad,fig:input_ad} show the results of simulating the system in PST with the control obtained from the abstraction refinement approach. The controlled system has the same performance in satisfying the specification.
% just as was true for the fixed discretisation control approach.

\begin{figure}
 % \centering
  %\begin{minipage}[t]{\columnwidth}
    \centering
    \includegraphics[width=0.9\columnwidth]{./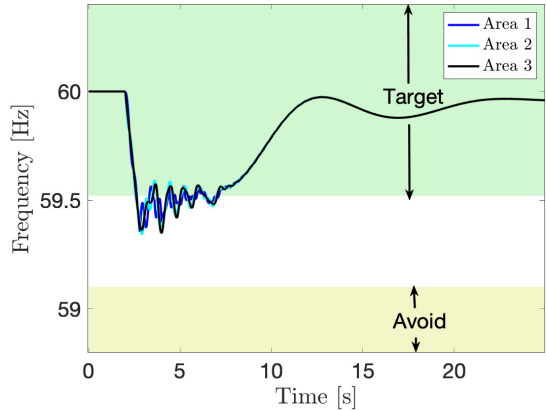}
    \caption{3A3M power system frequencies for the three areas, with the frequency of an area is measured at the corresponding bus in that area. The control synthesised by the fixed discretisation approach successfully keeps the frequencies of the three areas outside of the avoid set. The frequencies leave the target set for around $4.4$ seconds before staying in the target set.
    %with some oscillations in and out of the target region. %\Mahmoud{This caption is not clear to me. What does ``leaving the target region'' mean?}\BW{Hopefully updated figures make this clearer?}
    }
    \label{fig:frequency}
 % \end{minipage}
  \end{figure}

\begin{figure}
  \centering
  %\begin{minipage}[t]{\columnwidth}
    \centering
    \includegraphics[width=0.9\columnwidth]{./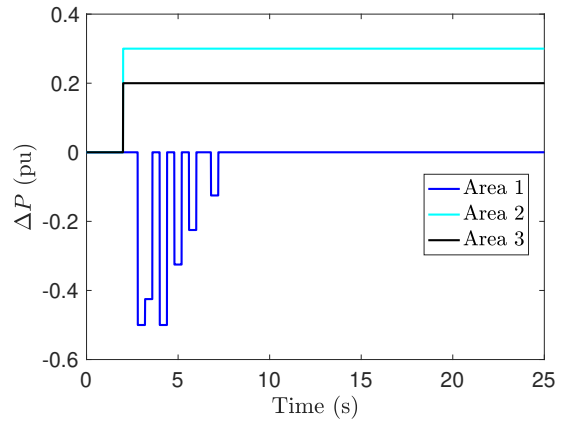}
    \caption{3A3M power system load changes for the three areas. Loads at buses $2$ and $3$ increase by $0.3$ and $0.2$ pu, respectively. %$W = [0, 0]\times [0.2, 0.3]$ respectively 
    Load at bus $1$ is used to control the frequency using our data-driven approach with fixed discretisation.}
    \label{fig:input}
 % \end{minipage}
  \end{figure}

\begin{figure}
%  \centering
  %\begin{minipage}[t]{\columnwidth}
    \centering
    \includegraphics[width=0.9\columnwidth]{./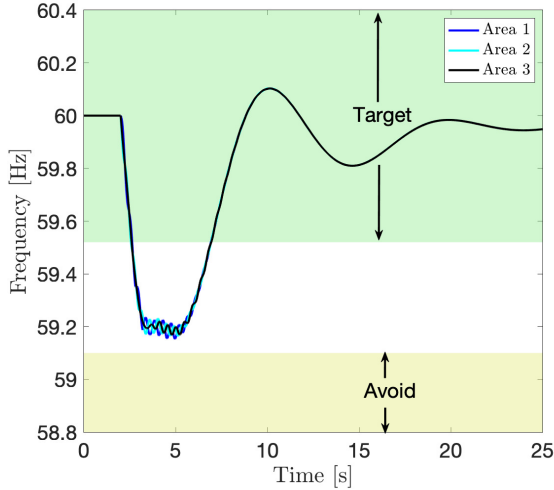}
    \caption{
    3A3M power system frequencies for the three areas, with the frequency of an area is measured at the corresponding bus in that area. The control synthesised by the abstraction refinement approach successfully satisfies the specification. The frequencies leave the target set for around $4.2$ seconds before staying in the target set.
   }
    \label{fig:frequency_ad}
 % \end{minipage}
  \end{figure}

\begin{figure}
 % \centering
  %\begin{minipage}[t]{\columnwidth}
    \centering
    \includegraphics[width=0.9\columnwidth]{./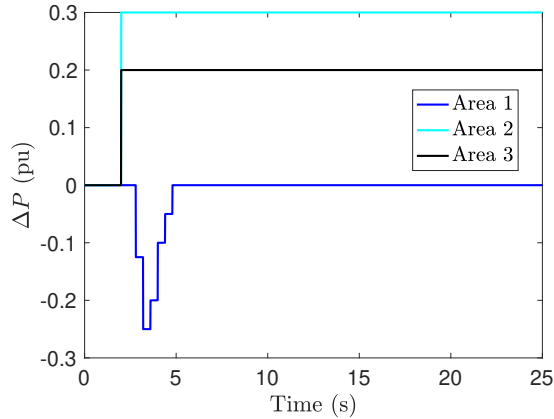}
    \caption{
    3A3M power system load changes for the three areas. Loads at buses $2$ and $3$ increase by $0.3$ and $0.2$ pu, respectively. %$W = [0, 0]\times [0.2, 0.3]$ respectively 
    Load at bus $1$ is used to control the frequency using our data-driven approach with abstraction refinement.
    }
    \label{fig:input_ad}
  %\end{minipage}
  \end{figure}
  
  \subsection{Comparison with PAC Learning}
  In this subsection, we compare our approach with the results provided by Xue et al.~\cite{xue2020pac} that is based on probably approximately correct (PAC) learning on the 3A3M power system case study.
%another approach is used to calculate the minimum number of samples required for computing finite-state abstractions using probably approximately correct (PAC) learning. %The method uses no bias term $\gamma$ but has confidence $\beta$, sampling dimension $m$, and samples $K$ which gives an error term  $\varepsilon$, distinct from the $\varepsilon$ previously used in our paper.
 The abstraction approach of \cite{xue2020pac} has no bias term $\gamma$,  but uses confidence parameter $\beta\in(0,1)$, error level $\epsilon\in(0,1)$, and cardinality of the parameter vector $\theta$ denoted by $q\in\nats$.
 %denoting cardinality of the vector of learnable parameters, i.e., $\theta$. %\Mahmoud{The previous notations were not exactly as defined in \cite{xue2020pac} (modified now). Please check if they are used correctly during the implementations. Also, in contrast to the work in \cite{Arcak:2021}, the sample number computed using the equation below corresponds to the full state space ($X$) and not a single cell $\ball_{\eta_x}(\xh)$. Finally, we may want to mention that the confidence given by their method drops exponentially with respect to the horizon length.} \BW{Should this be included here or in the discussion around the theory of the sampling and LCE?}
%  \begin{equation}
%  \varepsilon \leq \frac{2}{K}(ln \frac{1}{\beta} + m)
%  \end{equation}
%where $\varepsilon \in (0,1)$ and $\beta \in (0,1)$.
The required number of samples is
\begin{equation}
	N \geq \frac{2}{\epsilon}(ln \frac{1}{\beta} + q),
\end{equation}
which allows the constructed abstraction to hold for the entire state space except a subset measured by parameter $\epsilon$.
%\BW{commented something that I think became obsolete due to inclusion of tables and comparisons. The values were out of date also.}

\begin{comment}
Using this approach for the 3A3M power system case-study, setting $\beta = 0.01$ and $\epsilon = 0.01$, we can calculate the minimum number of samples required. %for 99\% chance of a correct winning region. 
This is calculated as a minimum of $1921$ samples. When synthesising the controller using the fixed discretisation algorithm for $1921$ samples and no bias, a controller with a winning region $|\mathcal V|$ of $246,424$ is found.
\end{comment}

% !TEX root = main.tex

\begin{table}
	\caption{Comparing the winning domain of controllers obtained from our RSA method, PAC method of \cite{xue2020pac}, and the model-based approach of \cite{reissig2016feedback}.
	%The winning domain of the methods specified in different rows are compared to the ones obtained by using the approaches listed in every column. 
	The pairwise comparison is made by computing the intersections ($\cap$) and set differences ($\text{row}\setminus\text{column}$). The results are reported both in cardinalities and percentages.}
	\renewcommand{\arraystretch}{1.2}
	\setlength{\tabcolsep}{0.7em}
	\resizebox{1\columnwidth}{!}{
	\begin{tabular}{l|cc|cc|cc}
		\toprule
		Winning Domain &\multicolumn{2}{c|}{RSA}& \multicolumn{2}{c|}{PAC}& 
		\multicolumn{2}{c}{Model-based}\\
		
		&$\cap$&$\backslash$&$\cap$&$\backslash$&$\cap$&$\backslash$\\
		\midrule
		RSA &$230,760$&$0$&$230,760$&$0$&$230,760$&$0$\\
		$\%$	&$100.00\%$&$0.00\%$&$100.00\%$&$0.00\%$&$100.00\%$&$0.00\%$\\
		
		\midrule
		PAC &$230,760$&$15,664$&$246,424$&$0$&$245,345$&$1,079$\\
		$\%$	&$93.64\%$&$6.36\%$&$100.00\%$&$0.00\%$&$99.56\%$&$0.44\%$\\
		
		\midrule
		Model-based &$230,760$&$22,216$&$245,345$&$7,631$&$252,976$&$0$\\
		$\%$&$91.22\%$&$8.78\%$&$96.98\%$&$3.02\%$&$100.00\%$&$0.00\%$\\
		
		\bottomrule
	\end{tabular}}
	
	\label{tab:compare}
\end{table}

\begin{comment}
\begin{table}[h]
	\caption{}
	\renewcommand{\arraystretch}{1.2}
	\setlength{\tabcolsep}{0.7em}
	\resizebox{1\columnwidth}{!}{
	\begin{tabular}{l|cc|cc|cc}
		\toprule
		Controller&\multicolumn{2}{c|}{RSA}& \multicolumn{2}{c|}{PAC}& 
		\multicolumn{2}{c}{Model-based}\\
		
		&$\cap$&$\backslash$&$\cap$&$\backslash$&$\cap$&$\backslash$\\
		\midrule
		RSA &$6,602,679$&$0$&$6,569,865$&$32,814$&$6,578,534$&$24,145$\\
		$\%$	&$100.00\%$&$0.00\%$&$99.50\%$&$0.50\%$&$99.63\%$&$0.37\%$\\
		
		\midrule
		PAC &$6,569,865$&$148,040$&$6,717,905$&$0$&$6,661,124$&$56,781$\\
		$\%$	&$97.80\%$&$2.20\%$&$100.00\%$&$0.00\%$&$99.15\%$&$0.85\%$\\
		
		\midrule
		Model-based &$6,578,534$&$365,400$&$6,661,124$&$282,810$&$6,943,934$&$0$\\
		$\%$	&$94.74\%$&$5.26\%$&$95.93\%$&$4.07\%$&$100.00\%$&$0.00\%$\\
		
		\bottomrule
	\end{tabular}}
	\label{tab:compare}
\end{table}
\end{comment}
We implement our data-driven robust scenario approach (RSA), the PAC approach in \cite{xue2020pac} with parameters $\beta = 0.01$ and $\epsilon = 0.01$, and the model-based approach of \cite{reissig2016feedback}.
Table~\ref{tab:compare} compares the winning domain of the controllers by reporting the intersections ($\cap$) and set differences ($\text{row}\setminus\text{column}$). It can be seen that the winning domain obtained by our RSA method %Robust Scenario Approach (RSA) %\Mahmoud{(what does RSA stand for?)} \BW{Done. I think we might need to change all uses of Fixed Discretisation to this new term RSA. We should check with Sadegh.}
 is a subset of the ones computed by PAC and the model-based approaches.
 %The winning domain obtained using the PAC method is nearly a subset of the model-based approach, with an intersection of $99.56\%$.
 This shows that our approach is more conservative than the model-based approach but correctly finds a subset of the winning domain. In contract, the PAC approach gives a winning domain that includes states not identified winning by the model-based approach. It includes $1079$ states outside of the winning domain obtained by the model-based approach.
Due to the nature of the PAC learning, some of these states are incorrectly identified as winning. The main reason is that the PAC method may miss to capture some of the transitions and does not always generate an overapproximation of the system behaviours. 
 %Our RSA method is more conservative than the PAC method with a smaller number of winning states.
% However, when comparing PAC to the model-based controller, PAC consists of $1079$ states outside of the winning region obtained by the model-based method. %Of these $1079$ states which are inside the winning domain associated to the PAC method but outside the winning domain given by the model-based method. % inside the PAC controller but outside the model-based controller,
Among these $1079$ states, a counter example can be found, demonstrating a lack of guarantee provided by the PAC method. At state $(0.0187,0.0262 ,-0.1163 )$ the PAC controller calculates $u = -0.075$ to be an input which will transition to a safe state under any disturbances. However, the system under disturbances $W_1 = 0.2$ and $W_2 = 0.3$ will lead to the state $(0.0188  , 0.0131 , -0.1167)$ that is outside of the winning domain of the controller. 
%As the new state is outside the domain of the controller it is deemed an uncontrollable state by the synthesised controller not being able to find any inputs at that state which can be used to satisfy the specification.
In comparison, the winning domain provided by our RSA method is a subset of the one from the model-based method and provides full guarantees on the satisfaction of the specification and correctness of the controller. This guarantee is obtained at the cost of increased number of samples and a bias term included in the growth bound calculations, which makes the controller more conservative.
%\BW{Each transition in the RSA control approach leads to another point within the domain of the controller.} %\Mahmoud{Again, this arguement won't work. First, every claim holds up-to a certain confidence. Second, it could have been the case that the winning domain of the controller obtained by the model-based approach is a subset of the one obtained by a data-driven method and the guarantee still holds.} \BW{I have tried explaining this more clearly. Essentially we found a point in PAC which could crash the simulation while in RSA there wasn't any and this is likely down to the conservativeness of the RSA approach compared with PAC.}

As a final point on this case study, note that our sampling approach uses the Lipschitz constant estimated using sample trajectories. This Lipschitz constant can in turn be used to construct the abstraction.
%comparing our RSA approach with this approach 
The direct use of the estimated Lipschitz constant does not provide a formal guarantee as it is an estimated value that converges to the true value only in the limit (i.e., the number of samples goes to infinity), and is likely to provide an overly conservative controller. On this particular case study, the direct use of the Lipschitz constant gives a controller that covers only $78.8\%$ of the winning domain of the model-based approach.
% (winning domain $199,323$. This provides a subset of just $86.4\%$ of the RSA winning domain and $78.8\%$ of the model-based winning domain. %\Mahmoud{Did we multiply $L_d$ by the cell's radius ($\eta_x/2$) for reporting the above numbers? %in order to compute the winning size of the controller obtained by the direct use of Lipschitz constant?
%} \BW{It doesn't, we used $L(u) = \frac{1}{\tau}log(H)$ where $H$ is estimate of $L_d$. Edit: discussed with Sadegh, because we use the ODE solver, SCOTS will do the calculation using the discretisation will be inside these so in fact it does.} 
%While our approach is conservative enough to guarantee the satisfaction of the given specification, it is accepting enough to provide a larger winning domain for the system.

\subsection{Parameter Optimisation}

In this subsection, we discuss how selection of different parameters can affect the sample complexity and conservativeness of our method. We fix the path planning case study with the estimated Lipschitz constant $1.46$.
\Cref{fig:eps_N,fig:beta_N} illustrate the effect of changing parameters $\varepsilon,\beta$ on the number of samples $N$ required for each pair $(\xh,\uh)$ in order to compute the growth bound with confidence $(1-\beta)$.
\Cref{fig:beta_N} illustrates the effect of increasing the confidence parameter $\beta$ on reducing the sample complexity, for a fixed $\varepsilon=0.01$.
\Cref{fig:eps_N} shows that for a fixed $\beta=0.01$, increasing $\varepsilon$ leads to a rapid drop in $N$. 
In both \Cref{fig:eps_N,fig:beta_N}, the sample complexity increases in the presence of disturbance as the dimension of the sample space becomes larger.

\Cref{fig:eps_bias} demonstrate the effect of changing $\varepsilon$ on the value of the bias term $\gamma$ that makes the inequalities of the SCP more conservative. The bias term $\gamma$ increases for larger values of $\varepsilon$. Therefore, increasing $\varepsilon$ can decrease the sample complexity while increasing $\gamma$. Finally, it can be observed that the value of $\gamma$ is larger in the presence of disturbance. 
%This difference is changed depending on the volume of disturbance space $W$.	
%\Cref{fig:eps_bias,fig:eta_bias} demonstrate the effect of changing $\varepsilon$ and $L_d$ on the value of $\gamma$ for making the inequalities of the SCP more conservative. $\gamma$ increases for larger values of $L_d$. Therefore, increasing $\varepsilon$ can decrease the sample complexity while increasing $\gamma$. In \Cref{fig:eta_bias}, increasing the state-space discretisation parameter $\eta_x$ causes an increase in the value of $\gamma$. Finally, in both \Cref{fig:eps_bias,fig:eta_bias}, it can be observed that the value of $\gamma$ is larger in the presence of disturbance. This difference is changed depending on the disturbance space $[0,\bar w]$.

%\milad{We have three figures which one we need to include???} \Sadegh{which figures?}

%As we can see in both the controller without disturbance (Fig.~\ref{fig:vehicle-nd}) and with the disturbance (Fig.~\ref{fig:vehicle-d}) , the controller in the data-driven approach can satisfy a reach- avoid specification. To compare this results with the model-based version we plot the trajectories associated with them in the figures.

%adding figures here
\begin{figure}
	\includegraphics[width=0.9\columnwidth]{./}
	\caption{Required number of samples for our approach as a function of $\beta$ for a fixed $\varepsilon=0.01$.}
	\label{fig:beta_N}
\end{figure}
\begin{figure}
	\includegraphics[width=0.9\columnwidth]{./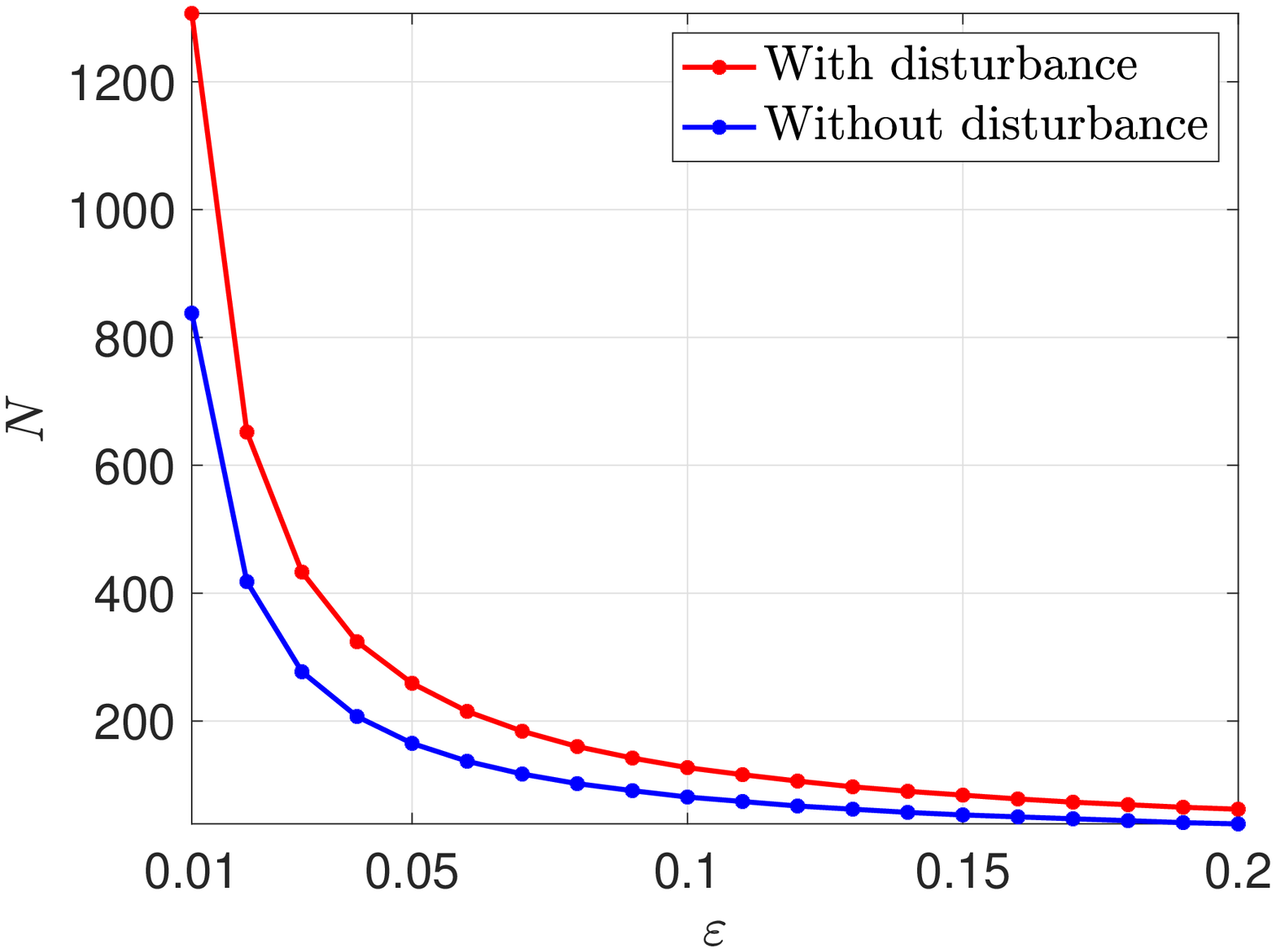}
	\caption{Required number of samples for our approach as a function of $\varepsilon$ for a fixed $\beta=0.01$. %\Mahmoud{All of the Figures 11- 13 must be modified with respect to the estimated value of $L_d$.} \BW{Done.}
	}
	\label{fig:eps_N}
\end{figure}
\begin{figure}
	\includegraphics[width=0.9\columnwidth]{./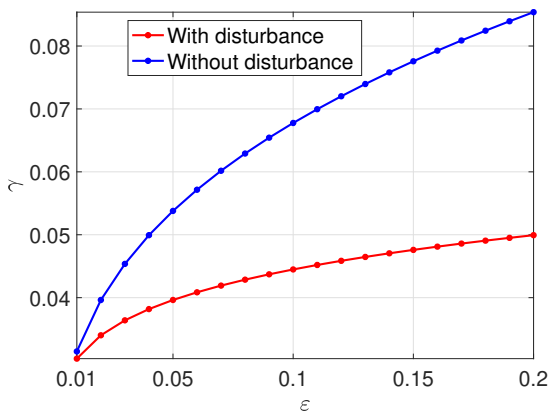}
	\caption{The bias term $\gamma$ as a function of $\varepsilon$.}
	\label{fig:eps_bias}
\end{figure}

\section{Discussion and Future Work}\label{sec:conclusion}
We proposed a data-driven method for computing finite abstractions of continuous systems with unknown dynamics. Our approach casts the computation of an overapproximation of reachable sets as a robust convex program (RCP). A feasible solution for the RCP is then obtained with a given confidence by solving a corresponding scenario convex program (SCP). The SCP does not need the dynamics of the system and requires only a finite set of sample trajectories. We provided a sample complexity result that gives a lower-bound on the number of trajectories to achieve a certain confidence.
Our sample complexity results requires knowing a bound on the Lipschitz constant of the system, that we estimated using extreme value theory.

We guaranteed that with high confidence, the computed abstraction is a valid abstraction of the system that overapproximates its behaviours on its entire state space.
We showed that our data-driven approach can be embedded into abstraction refinement schemes for designing a controller and enlarging the winning region of the controller with respect to satisfaction of temporal properties. Finally, we evaluated our approach on three case studies.
% The computed abstractions could be used in synthesising controllers which provide formal guarantees against LTL specifications. Finally, the performance of our method is evaluated using two case studies.

In the future, we plan to extend our approach by enlarging the class of disturbances beyond piece-wise constant ones (i.e., tackling the issue of infinite dimensional sampling spaces), improve scalability of the approach by providing more efficient parallel implementation of the approach, and apply it to large case studies that are combinations of differential equations, block diagrams, and lookup tables.

%A limitation of our method is that we restrict disturbance signals to the ones whose value does not vary during the course of the fixed sampling time. To relax this restriction, we need to know---a possibly overapproximated---set including all of the possible accumulations of the disturbance signal during the sampling period. One way to compute such an overapproximation is to assume knowing bounds on the system's continuity in the Lipschitz sense. Another point about our work which can be addressed in future is that we assumed uniform distribution over the disturbance set. Our method can be easily extended to those distributions whose probability density function is known. However, to extend our method to those unknown distributions would be an interesting future research direction.

\bibliographystyle{plain}
\bibliography{references}

\begin{thebibliography}{10}

\bibitem{abate2008probabilistic}
Alessandro Abate, Maria Prandini, John Lygeros, and Shankar Sastry.
\newblock Probabilistic reachability and safety for controlled discrete time
  stochastic hybrid systems.
\newblock {\em Automatica}, 44(11):2724--2734, 2008.

\bibitem{katoen2008book}
Christel Baier and Joost{-}Pieter Katoen.
\newblock {\em Principles of model checking}.
\newblock {MIT} Press, 2008.

\bibitem{bajcsy2019efficient}
Andrea Bajcsy, Somil Bansal, Eli Bronstein, Varun Tolani, and Claire~J Tomlin.
\newblock An efficient reachability-based framework for provably safe
  autonomous navigation in unknown environments.
\newblock In {\em 2019 IEEE 58th Conference on Decision and Control (CDC)},
  pages 1758--1765. IEEE, 2019.

\bibitem{belta2017formal}
Calin Belta, Boyan Yordanov, and Ebru~Aydin Gol.
\newblock {\em Formal methods for discrete-time dynamical systems}, volume~15.
\newblock Springer, 2017.

\bibitem{boyd2004convexoptimization}
Stephen Boyd and Lieven Vandenberghe.
\newblock {\em Convex Optimization}.
\newblock Cambridge University Press, 2004.

\bibitem{calafiore2006scenario}
Giuseppe~Carlo Calafiore and Marco~C Campi.
\newblock The scenario approach to robust control design.
\newblock {\em IEEE Transactions on automatic control}, 51(5):742--753, 2006.

\bibitem{ChowCheung1992}
J.H. Chow and K.W. Cheung.
\newblock A toolbox for power system dynamics and control engineering education
  and research.
\newblock {\em IEEE Transactions on Power Systems}, 7(4):1559--1564, 1992.

\bibitem{Belta:2021}
Max~H. Cohen and Calin Belta.
\newblock Model-based reinforcement learning for approximate optimal control
  with temporal logic specifications.
\newblock In {\em {HSCC} '21: 24th {ACM} International Conference on Hybrid
  Systems: Computation and Control, Nashville, Tennessee, May 19-21, 2021},
  pages 12:1--12:11. {ACM}, 2021.

\bibitem{Arcak:2021}
Alex Devonport, Adnane Saoud, and Murat Arcak.
\newblock Symbolic abstractions from data: A pac learning approach.
\newblock {\em arXiv preprint arXiv:2104.13901}, 2021.

\bibitem{djeumou2020fly}
Franck Djeumou, Abraham~P Vinod, Eric Goubault, Sylvie Putot, and Ufuk Topcu.
\newblock On-the-fly control of unknown systems: From side information to
  performance guarantees through reachability.
\newblock {\em arXiv preprint arXiv:2011.05524}, 2020.

\bibitem{esfahani2014performance}
Peyman~Mohajerin Esfahani, Tobias Sutter, and John Lygeros.
\newblock Performance bounds for the scenario approach and an extension to a
  class of non-convex programs.
\newblock {\em IEEE Transactions on Automatic Control}, 60(1):46--58, 2014.

\bibitem{Fan:2017}
Chuchu Fan, Bolun Qi, Sayan Mitra, and Mahesh Viswanathan.
\newblock Dryvr: Data-driven verification and compositional reasoning for
  automotive systems.
\newblock In {\em Computer Aided Verification - 29th International Conference,
  {CAV} 2017, Heidelberg, Germany, July 24-28, 2017, Proceedings, Part {I}},
  volume 10426 of {\em Lecture Notes in Computer Science}, pages 441--461.
  Springer, 2017.

\bibitem{girard2007approximation}
Antoine Girard and George~J Pappas.
\newblock Approximation metrics for discrete and continuous systems.
\newblock {\em IEEE Transactions on Automatic Control}, 52(5):782--798, 2007.

\bibitem{girard2009approximately}
Antoine Girard, Giordano Pola, and Paulo Tabuada.
\newblock Approximately bisimilar symbolic models for incrementally stable
  switched systems.
\newblock {\em IEEE Transactions on Automatic Control}, 55(1):116--126, 2009.

\bibitem{grover2021semantic}
Kush Grover, Fernando dos Santos~Barbosa, Jana Tumova, and Jan Kret{\i}nsky.
\newblock Semantic abstraction-guided motion planningfor scltl missions in
  unknown environments.
\newblock In {\em Robotics: Science and Systems}, 2021.

\bibitem{hsu2021safety}
Kai-Chieh Hsu, Vicen{\c{c}} Rubies-Royo, Claire~J Tomlin, and Jaime~F Fisac.
\newblock Safety and liveness guarantees through reach-avoid reinforcement
  learning.
\newblock In {\em Robotics: Science and Systems}, 2021.

\bibitem{kalagarla2021model}
Krishna~C Kalagarla, Rahul Jain, and Pierluigi Nuzzo.
\newblock Model-free reinforcement learning for optimal control of
  markovdecision processes under signal temporal logic specifications.
\newblock {\em arXiv preprint arXiv:2109.13377}, 2021.

\bibitem{lavaei2021formal}
Abolfazl Lavaei, Ameneh Nejati, Pushpak Jagtap, and Majid Zamani.
\newblock Formal safety verification of unknown continuous-time systems: A
  data-driven approach.
\newblock In {\em Proceedings of the 24th International Conference on Hybrid
  Systems: Computation and Control}, HSCC '21, New York, NY, USA, 2021.
  Association for Computing Machinery.

\bibitem{legat2021abstraction}
Beno{\^\i}t Legat, Rapha{\"e}l~M Jungers, and Jean Bouchat.
\newblock Abstraction-based branch and bound approach to q-learning for hybrid
  optimal control.
\newblock In {\em Learning for Dynamics and Control}, pages 263--274. PMLR,
  2021.

\bibitem{MaFan2016}
Minyue Ma and Lingling Fan.
\newblock Implementing consensus based distributed control in power system
  toolbox.
\newblock In {\em 2016 North American Power Symposium (NAPS)}, pages 1--6,
  2016.

\bibitem{rupak2020ouputfeedback}
Rupak Majumdar, Necmiye Ozay, and Anne{-}Kathrin Schmuck.
\newblock On abstraction-based controller design with output feedback.
\newblock In Aaron~D. Ames, Sanjit~A. Seshia, and Jyotirmoy Deshmukh, editors,
  {\em {HSCC} '20: 23rd {ACM} International Conference on Hybrid Systems:
  Computation and Control, Sydney, New South Wales, Australia, April 21-24,
  2020}, pages 15:1--15:11. {ACM}, 2020.

\bibitem{Girard:2021}
Anas Makdesi, Antoine Girard, and Laurent Fribourg.
\newblock Efficient data-driven abstraction of monotone systems with
  disturbances.
\newblock In {\em 7th {IFAC} Conference on Analysis and Design of Hybrid
  Systems, {ADHS} 2021, Brussels, Belgium, July 7-9, 2021}, volume~54 of {\em
  IFAC-PapersOnLine}, pages 49--54. Elsevier, 2021.

\bibitem{mitsioni2021safe}
Ioanna Mitsioni, Pouria Tajvar, Danica Kragic, Jana Tumova, and Christian Pek.
\newblock Safe data-driven contact-rich manipulation.
\newblock In {\em 2020 IEEE-RAS 20th International Conference on Humanoid
  Robots (Humanoids)}, pages 120--127. IEEE, 2021.

\bibitem{Esfahani2015scp}
Peyman Mohajerin~Esfahani, Tobias Sutter, and John Lygeros.
\newblock Performance bounds for the scenario approach and an extension to a
  class of non-convex programs.
\newblock {\em IEEE Transactions on Automatic Control}, 60(1):46--58, 2015.

\bibitem{reissig2016feedback}
G.~Reissig, A.~Weber, and M.~Rungger.
\newblock Feedback refinement relations for the synthesis of symbolic
  controllers.
\newblock {\em IEEE TAC}, 62(4):1781--1796, 2016.

\bibitem{rungger2016scots}
Matthias Rungger and Majid Zamani.
\newblock Scots: A tool for the synthesis of symbolic controllers.
\newblock In {\em Proceedings of the 19th international conference on hybrid
  systems: Computation and control}, pages 99--104, 2016.

\bibitem{Belta:2018}
Sadra Sadraddini and Calin Belta.
\newblock Formal guarantees in data-driven model identification and control
  synthesis.
\newblock In {\em Proceedings of the 21st International Conference on Hybrid
  Systems: Computation and Control (part of {CPS} Week), {HSCC} 2018, Porto,
  Portugal, April 11-13, 2018}, pages 147--156. {ACM}, 2018.

\bibitem{salamati2021data}
Ali Salamati, Abolfazl Lavaei, Sadegh Soudjani, and Majid Zamani.
\newblock Data-driven verification and synthesis of stochastic systems through
  barrier certificates.
\newblock {\em arXiv preprint arXiv:2111.10330}, 2021.

\bibitem{stanly2020}
Stanly Samuel, Kaushik Mallik, Anne{-}Kathrin Schmuck, and Daniel Neider.
\newblock Resilient abstraction-based controller design.
\newblock In {\em {HSCC} '20: 23rd {ACM} International Conference on Hybrid
  Systems: Computation and Control, Sydney, New South Wales, Australia, April
  21-24, 2020}, pages 33:1--33:2. {ACM}, 2020.

\bibitem{soudjani2018concentration}
Sadegh Soudjani and Rupak Majumdar.
\newblock Concentration of measure for chance-constrained optimization.
\newblock {\em IFAC-PapersOnLine}, 51(16):277--282, 2018.

\bibitem{sun2020learning}
Dawei Sun, Susmit Jha, and Chuchu Fan.
\newblock Learning certified control using contraction metric.
\newblock {\em arXiv preprint arXiv:2011.12569}, 2020.

\bibitem{Tabuada2009book}
Paulo Tabuada.
\newblock {\em Verification and Control of Hybrid Systems: A Symbolic
  Approach}.
\newblock Springer Publishing Company, Incorporated, 1st edition, 2009.

\bibitem{verdier2020formal}
Cees~F Verdier, Niklas Kochdumper, Matthias Althoff, and Manuel Mazo~Jr.
\newblock Formal synthesis of closed-form sampled-data controllers for
  nonlinear continuous-time systems under stl specifications.
\newblock {\em arXiv preprint arXiv:2006.04260}, 2020.

\bibitem{wang2020falsification}
Xiao Wang, Saasha Nair, and Matthias Althoff.
\newblock Falsification-based robust adversarial reinforcement learning.
\newblock In {\em 2020 19th IEEE International Conference on Machine Learning
  and Applications (ICMLA)}, pages 205--212. IEEE, 2020.

\bibitem{Watanab21}
Kandai Watanabe, Nicholas Renninger, Sriram Sankaranarayanan, and Morteza
  Lahijanian.
\newblock Probabilistic specification learning for planning with safety
  constraints.
\newblock In {\em Intelligent Robots and Systems (IROS)}, page TBA. IEEE, 2021.

\bibitem{weng2018evaluating}
Tsui-Wei Weng, Huan Zhang, Pin-Yu Chen, Jinfeng Yi, Dong Su, Yupeng Gao,
  Cho-Jui Hsieh, and Luca Daniel.
\newblock Evaluating the robustness of neural networks: An extreme value theory
  approach.
\newblock In {\em International Conference on Learning Representations}, 2018.

\bibitem{wood1996estimation}
GR~Wood and BP~Zhang.
\newblock Estimation of the lipschitz constant of a function.
\newblock {\em Journal of Global Optimization}, 8(1):91--103, 1996.

\bibitem{xue2020pac}
Bai Xue, Miaomiao Zhang, Arvind Easwaran, and Qin Li.
\newblock Pac model checking of black-box continuous-time dynamical systems.
\newblock {\em IEEE Transactions on Computer-Aided Design of Integrated
  Circuits and Systems}, 39(11):3944--3955, 2020.

\end{thebibliography}

\end{document}